\newtheorem{lemma}{Lemma}
\newtheorem{theorem}{Theorem}
\newtheorem{proposition}{Proposition}
\newtheorem{corollary}{Corollary}
\newtheorem{question}{Question}
\renewenvironment{abstract}
{\small\vspace{-1em}
\begin{center}
\bfseries\abstractname\vspace{-.5em}\vspace{0pt}
\end{center}
\list{}{
\setlength{\leftmargin}{0.6in}%
\setlength{\rightmargin}{\leftmargin}}%
\item\relax}
{\endlist}
\newcommand{\myparagraph}[1]{\medskip\noindent\textbf{#1}} 
\def\figscale{1.1} 
\newcommand{\iflongelse}[2]{\ifthenelse{\boolean{long-version}}{#1}{{#2}}}
\title{%
Enumerating minimal dominating sets and variants in chordal bipartite graphs%
\thanks{The second author is supported by the ANR project PARADUAL (ANR-24-CE48-0610-01). The last author is supported by the European Union, project PACKENUM, grant number 101109317. Views and opinions expressed are however those of the author only and do not necessarily reflect those of the European Union. Neither the European Union nor the granting authority can be held responsible for them.}}
\author[1]{Emanuel Castelo}
\author[1]{Oscar Defrain}
\author[2,3]{Guilherme C. M. Gomes}
\affil[1]{Aix-Marseille Université, CNRS, LIS, 13009 Marseille, France}
\affil[2]{LIRMM, Université de Montpellier, CNRS, Montpellier, France}
\affil[3]{Universidade Federal de Minas Gerais, Belo Horizonte, Brazil}
\date{February 2025}
\begin{document}

\maketitle

\begin{abstract}
Enumerating minimal dominating sets with polynomial delay in bipartite graphs is a long-standing open problem.
To date, even the subcase of chordal bipartite graphs is open, with the best known algorithm due to Golovach, Heggernes, Kanté, Kratsch, S\ae{}ther, and Villanger running in incremental-polynomial time.
We improve on this result by providing a polynomial delay and space algorithm enumerating minimal dominating sets in chordal bipartite graphs.
Additionally, we show that the total and connected variants admit polynomial and incremental-polynomial delay algorithms, respectively, within the same class.
This provides an alternative proof of a result by Golovach et al.~for total dominating sets, and answers an open question for the connected variant.
Finally, we give evidence that the techniques used in this paper cannot be generalized to bipartite graphs for (total) minimal dominating sets, unless \PNP{}, and show that enumerating minimal connected dominating sets in bipartite graphs is harder than enumerating minimal transversals in general hypergraphs.

\vskip5pt\noindent{}{\bf Keywords:} algorithmic enumeration, minimal dominating sets, connected dominating sets, total dominating sets, chordal bipartite graphs, sequential method, polynomial delay.
\end{abstract}

\section{Introduction}

Hypergraphs are encountered in countless areas of computer science or discrete mathematics.
Formally, they consist of a pair $\H=(V(\H),E(\H))$ where $V(\H)$ is a set of elements called \emph{vertices}, and $E(\H)$ is a family of subsets of $V(\H)$ called hyperedges (or simply edges).
Hypergraphs are known as \emph{graphs} when their edges are of size precisely two.
A \emph{transversal} in a hypergraph $\H$ is a set $T\subseteq V(\H)$ which intersects every edge of $\H$.
It is called \emph{minimal} if it is inclusion-wise minimal, i.e., if \textit{none} of its proper subsets is a transversal.
The problem of enumerating the minimal transversals of a hypergraph, usually denoted by \transenum{}, is one of the most important problems in algorithmic enumeration, for it has far-reaching implications in various areas of computer science such as logic, database theory, data profiling, or
artificial intelligence~\cite{kavvadias1993horn,eiter1995identifying,gunopulos1997data,blasius2022efficiently}; we refer the reader to~\cite{eiter1995identifying} and the surveys~\cite{eiter2008computational,gainer2017minimal} for a comprehensive overview on the fundamental and practical aspects of this problem.

In a typical enumeration problem such as \transenum{}, a first observation to be made is that the number of solutions can be exponential in both $n$, the number of vertices, and $m$, the number of edges.
Thus, asking for running times which are polynomial in $n+m$ is meaningless.
Rather, we are interested in algorithms performing in output-polynomial time, i.e., that run in a time which is polynomial in both the input size, and the output size.
More strict notions of tractability can be required, such as incremental-polynomial times, and polynomial delay.
We say that an algorithm runs in \emph{incremental-polynomial} time if it produces the $i^\text{th}$ solution in a time which is polynomial in the input size plus $i$.
An algorithm is said to run with \emph{polynomial delay} if the times before the first output, after the last output, and in between two consecutive outputs are bounded by a polynomial in the input size only.
To date, the best-known algorithm solving \transenum{} is due to Fredman and Khachiyan~\cite{fredman1996complexity} and runs in quasi-polynomial incremental time. 
Since then, output-polynomial time algorithms were obtained for several important classes, including those generalizing that of bounded degree~\cite{eiter2003new}, or bounded edge size~\cite{khachiyan2007conformality}.

Dominating sets are among the most studied objects in graphs.
A \emph{dominating set} in a graph $G$ is a set $D\subseteq V(G)$ such that every vertex is either in $D$, or adjacent to a vertex of $D$.
Again, it is called \emph{minimal} if it is inclusion-wise minimal.
The problem of enumerating minimal dominating sets in graphs is denoted by \domenum{}.
Note that being a dominating set of $G$ can be rephrased as being a transversal of each closed neighborhood of $G$, so, \domenum{} is a special case of \transenum{} with a linear number of hyperedges. 

The problem \domenum{} has gained lots of interest since it has been shown, perhaps surprisingly, to be polynomially equivalent to \transenum{}, even when restricted to cobipartite graphs~\cite{kante2014split}.
In particular, this work has initiated a fruitful line of research of characterizing which graph classes admit output-polynomial algorithms, in the hope of better understanding the underlying structure that makes \transenum{} tractable.
For example, polynomial-delay algorithms were obtained for various graph classes including chordal graphs \cite{kante2015chordal} (with linear delay on some subclasses \cite{kante2014split,defrain2019neighborhood}), graphs of bounded degeneracy~\cite{eiter2003new}, line graphs \cite{kante2015line}, graphs of bounded clique-width \cite{courcelle2009linear}, or graph classes related to permutation graphs and their generalizations~\cite{kante2012neighbourhood, golovach2018lmimwidth, bonamy2020comp}.
Incremental-polynomial time algorithms were obtained for graphs of bounded conformality~\cite{khachiyan2007conformality}, $C_{\leq 6}$-free graphs~\cite{golovach2015flipping}, chordal bi\-partite graphs \cite{golovach2016enumerating}, 
$(C_{4},C_{5},\text{claw})$-free graphs \cite{kante2012neighbourhood}, 
or graphs arising in geometrical contexts~\cite{elbassioni2019global,golovach2018lmimwidth} including unit-disk and unit-square graphs.
Output-polynomial time algorithms are known for $\log n$-degenerate graphs~\cite{eiter2003new}, and graphs of bounded clique number \cite{bonamy2020kt}.

In~\cite{kante2014split}, the authors also include two natural variants of domination in their study: total and connected dominating sets.
A \emph{total dominating set} is a set $D$ such that every vertex of $G$ has a neighbor in $D$.
In other words, elements of $D$ cannot ``dominate'' themselves and need a neighbor in the set.
A \emph{connected dominating set} is a dominating set that induces a connected subgraph.
Note that any connected dominating set is a total dominating set, as long as it has more than one element, making the intersection of these two notions irrelevant to study.
Perhaps not surprisingly, the authors in \cite{kante2014split} show that the problems of enumerating minimal total and connected dominating sets, denoted \tdomenum{} and \cdomenum{}, respectively, are equivalent to \transenum{} when restricted to split graphs.
These equivalences on a relatively restricted graph class suggest that 
these variants of \domenum{}
are challenging, and may explain why they have received less attention in the literature.

Regarding total dominating sets, a first observation is that they are transversals of open neighborhoods. This implies an incremental quasi-polynomial time algorithm for the problem using the results of Fredman and Khachiyan~\cite{fredman1996complexity}.
However, few output-polynomial time algorithms are known for specific instances of \tdomenum{}.
Among them, we cite the class of chordal bipartite graphs, which admits a polynomial-delay algorithm for the problem \cite{golovach2016enumerating}. 

Concerning connected dominating sets, it is not known whether the problem reduces to \transenum{}. 
It has in fact been conjectured that this is not the case in~\cite{lorentz2015open}.
Rather, it has been shown in \cite{kante2014split} that the problem reduces to the enumeration of minimal transversals of an implicit hypergraph given by the minimal separators of the graph.
However, the number of such separators may be exponential, and are intractable to compute~\cite{brosse2024hardness}.
The only work known to us dealing with connected dominating sets enumeration, apart from~\cite{kante2014split}, falls under the input-sensitive approach~\cite{golovach2016connected,golovach2020connected,abu2022connected} where one aims at small exponential times.
Still, we note that some essential questions concerning output-sensitive enumeration were addressed in~\cite{abu2022connected} such as the \NP-hardness of the associated extension problem, i.e., deciding if a given set $U \subseteq V(G)$ is contained in some minimal connected dominating set.

Note that, in the literature, particular attention has been put into characterizing the status of these problems on bipartite, split, and cobipartite graphs.
This is because such instances contain the incidence graph of any hypergraph, which is fundamentally challenging and makes interesting links with \transenum{}.
However, and in contrast to the split and cobipartite cases, the case of bipartite graphs has been particularly challenging with the question of the (in)tractability of these problems raised in various works~\cite{kante2014encyclopedia,golovach2016enumerating}. 
Only recently has an output-polynomial time algorithm been provided for \domenum{} in a generalization of bipartite graphs~\cite{bonamy2020kt}.
Still the problem is not known to admit a polynomial-delay algorithm.
Concerning \tdomenum{} and \cdomenum{}, it is also open, to the best of our knowledge, whether they can be solved with polynomial delay, or even in output-polynomial time, in bipartite graphs.
Among natural subclasses of bipartite graphs, the case of chordal bipartite graphs has been studied as it admits nice structural properties analogous to chordal graphs, such as perfect elimination orderings and linearly many minimal separators.
We note, however, that even in this particular class, the existence of a polynomial-delay algorithm for \domenum{} seems open, and has been left open in~\cite{golovach2016enumerating}.

The following two natural questions arise from the literature.

\begin{question}\label{qu:mds}
    Can the minimal dominating sets of a chordal bipartite graph be enumerated with polynomial delay?
\end{question}

\begin{question}\label{qu:mcds}
    Can the minimal connected dominating sets of a chordal bipartite graph be enumerated in output-polynomial time?
\end{question}

Only the case of minimal total dominating sets has been settled for chordal bipartite graphs, with the following result.

\begin{restatable}[{\cite{golovach2016enumerating}}]{theorem}{thmtds}
\label{thm:mtds-ch-bip}
    The minimal total dominating sets of a chordal bipartite graph can be enumerated with polynomial delay.
\end{restatable}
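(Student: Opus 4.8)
The plan is to exploit the bipartite structure to decouple the two sides of a total dominating set, reducing the problem to two independent instances of minimal \rbdomset{} enumeration. Write $G=(X\cup Y,E)$ for the input chordal bipartite graph, and for a candidate $D$ put $D_X=D\cap X$ and $D_Y=D\cap Y$. Since every neighbor of an $X$-vertex lies in $Y$ and conversely, $D$ is a total dominating set exactly when $D_Y$ hits $N(x)$ for every $x\in X$ and $D_X$ hits $N(y)$ for every $y\in Y$. Crucially, minimality decouples as well: an element $y\in D_Y$ is removable if and only if every $x\in N(y)$ keeps a neighbor in $D_Y\setminus\{y\}$, a condition depending only on $X$ and $D_Y$. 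Hence $D$ is a minimal total dominating set if and only if $D_Y$ is an inclusion-minimal subset of $Y$ dominating $X$ and $D_X$ is an inclusion-minimal subset of $X$ dominating $Y$; equivalently, $D_Y$ and $D_X$ are minimal \rbdomset{}s on the two sides. The minimal total dominating sets are then exactly the sets $D_X\cup D_Y$ ranging over the Cartesian product of these two families, so enumerating them with polynomial delay follows from a polynomial-delay algorithm for each factor by iterating one enumerator inside the other (detecting upfront the degenerate case of an isolated vertex, for which no solution exists, and observing that both families are otherwise nonempty so each block of the product produces output).

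It remains to enumerate, with polynomial delay, the inclusion-minimal subsets $T\subseteq Y$ such that every $x\in X$ has a neighbor in $T$ (the symmetric side being identical). These are precisely the minimal transversals of the hypergraph $\H_X=(Y,\{N(x):x\in X\})$, whose incidence bipartite graph is $G$ itself and is therefore chordal bipartite; equivalently $\H_X$ is $\beta$-acyclic. I would enumerate them by a flashlight (binary partition) search: fix an ordering $y_1,\dots,y_k$ of $Y$ arising from a doubly lexical ordering of the biadjacency matrix of $G$, and build solutions by deciding, for increasing $i$, whether $y_i$ is taken into $T$ or discarded, recursing into a branch only when the current partial choice extends to at least one minimal transversal. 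Each leaf then outputs one minimal transversal, and provided each internal extension test runs in polynomial time, the whole search has polynomial delay and polynomial space.

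The main obstacle is the \emph{extension test}: given disjoint sets $I$ (forced in) and $O$ (forced out) of $Y$, decide whether some minimal transversal $T$ of $\H_X$ satisfies $I\subseteq T\subseteq Y\setminus O$. Feasibility as a transversal is easy — it suffices that every edge $N(x)$ still meets $Y\setminus O$ — but the minimality requirement is delicate, since each forced element of $I$ must be assignable a \emph{private} vertex of $X$ (an $x$ whose only neighbor in $T$ is that element), and this extension problem is intractable on general hypergraphs. Here I would use the $\Gamma$-free (totally balanced) structure of the doubly-lexically ordered biadjacency matrix: along the elimination ordering the relevant neighborhoods can be arranged to be linearly ordered by inclusion, so that private vertices may be assigned greedily and the forced consequences of a partial choice — elements that must subsequently enter or leave $T$ — can be propagated in polynomial time. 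Concretely, I expect to maintain, as the search descends, for each $x\in X$ its set of still-available neighbors together with a tentative private assignment, and to argue from the nestedness that a conflict-free completion exists if and only if a greedy propagation empties no edge and strands no forced element. Proving that this propagation is both sound and complete — that it accepts exactly the extendable partial choices — is the technical heart of the argument, and is precisely where the chordal bipartite hypothesis is essential.
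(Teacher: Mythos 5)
Your opening decomposition is correct: in a bipartite graph both the total-domination constraints and the minimality test split across the bipartition, so $\mtds(G)$ is exactly the Cartesian product of the inclusion-minimal subsets of $Y$ dominating $X$ with the inclusion-minimal subsets of $X$ dominating $Y$, and nesting one polynomial-delay enumerator inside the other preserves polynomial delay (modulo the isolated-vertex check you mention). This is, however, not the route taken in this paper: here the theorem is obtained as a byproduct of the sequential method developed for \domenum{}, by observing that with open neighborhoods each children-generation step becomes a red-blue domination instance on a \emph{bipartite chain} graph whose minimal solutions have size at most two and can therefore be listed trivially. Your decomposition is essentially the one underlying the original proof of Golovach et al., to whom the theorem is attributed; what it buys is independence from the ordered-generation machinery, at the price of having to solve one global enumeration problem --- the minimal transversals of $\{N(x) : x\in X\}$ over ground set $Y$ --- rather than a sequence of trivial local ones.

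That global problem is where your argument has a genuine gap. The flashlight search is only as good as its extension oracle, and you do not establish that the oracle is correct and polynomial: you say yourself that proving the propagation sound and complete ``is the technical heart of the argument,'' which is to say that it is missing. Moreover, the structural claim you lean on --- that along the ordering ``the relevant neighborhoods can be arranged to be linearly ordered by inclusion'' --- is false for chordal bipartite graphs in general: only the neighborhoods of vertices adjacent to a common weak-simplicial vertex are pairwise comparable (which is precisely why the \emph{local} instances in the paper are bipartite chains); the full family $\{N(x):x\in X\}$ need not be nested, and $\Gamma$-freeness of a doubly lexically ordered biadjacency matrix yields only a local forbidden pattern, not a chain. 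Since the extension problem for minimal transversals is NP-hard in general, tractability of your oracle on $\beta$-acyclic instances is a substantive claim that must be proved, not assumed. As written, the proposal reduces the theorem to an unproven lemma that carries all of the difficulty.
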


\myparagraph{Our results and techniques.} This paper is devoted to providing a positive answer to 
\hyperref[prop:mds-closed-neighborhoods]{Questions \ref*{qu:mds}} and~\ref{qu:mcds},
almost completing the picture for chordal bipartite graphs. 
Collaterally, we provide an alternative proof of \autoref{thm:mtds-ch-bip} involving small modifications of our algorithm for minimal dominating sets; we note that the obtained algorithm has similar time bounds as the one in \cite{golovach2016enumerating}.

Our algorithms for \domenum{} (\autoref{thm:mds-ch-bip}) and \tdomenum{} (\autoref{thm:mtds-ch-bip}) are based on the \emph{ordered generation} framework (also known as \emph{sequential method} in various works) which has been introduced in \cite{eiter2003new} as a generalization of the algorithm in \cite{tsukiyama1977new} (see also \cite{johnson1988generating} for a concise description). 
This approach has since then proved to be fruitful in various contexts related to domination and minimal transversals~\cite{conte2019irredundant,bonamy2020kt,defrain2021translating}.
Roughly, the idea is to decompose the instance with respect to an ordering of its elements, and, given the partial solutions of the instance induced by the $i$ first elements, to extend them to the next element of the instance. 
To achieve polynomial delay, a parent-child relation on partial solutions is defined. 
Then, the resulting tree is traversed in a DFS manner, and only the leaves (which correspond to actual solutions) are output.
Overall, the technique allows to reduce the general enumeration to the enumeration of the children of a partial solution, and is particularly suitable for instances admitting good elimination orderings.
In our case, we use the weak-elimination ordering of chordal bipartite graphs (see~\autoref{sec:prelim}) to solve children generation in polynomial time, yielding a polynomial-delay algorithm for \domenum{} and \tdomenum{} in that class.

Concerning \cdomenum{}, we show that the minimal separators of a chordal bipartite graph define a hypergraph of bounded conformality (\autoref{lemma:conf}), a parameter introduced  by Berge~\cite{berge1984hypergraphs} and successfully used in the design of tractable algorithms for domination~\cite{kante2012neighbourhood} and minimal transversals enumeration \cite{khachiyan2007conformality}.
This, combined with the fact that chordal bipartite graphs have polynomially many minimal separators, which can be listed in polynomial time (see \autoref{sec:prelim}), allows us to obtain an incremental-polynomial time algorithm for \cdomenum{} in that class (\autoref{thm:mcds-incp}).

Additionally, we give evidence that the sequential method cannot be directly used to generalize our results to bipartite graphs. 
More specifically, we show that the problem of generating the children of a given partial solution is intractable for \domenum{} (\autoref{lem:hardness-sequential-bipartite}) and \tdomenum{} \iflongelse{ (\autoref{lem:hardness-sequential-bipartite-mtds})}{(\autoref{sec:sequential-limit})}.
For \cdomenum{} we make a stronger statement (\autoref{thm:mcds-hardness}): the problem is at least as hard as \transenum{}, which was not known to the best of our knowledge, despite the reduction being straightforward. 

\myparagraph{Organization of the paper.}
We introduce preliminary notions and properties dealing with chordal bipartite graphs and minimal separators in \autoref{sec:prelim}.
We describe the sequential approach in \autoref{sec:sequential-approach}, and provide our algorithms for \domenum{} and \tdomenum{} in \autoref{sec:mds-and-mtds}.
The arguments for an incremental-polynomial time algorithm for \cdomenum{} in chordal bipartite graphs are given in \autoref{sec:mcds}.
Limitations of the sequential method in bipartite graphs are discussed in \autoref{sec:sequential-limit}, and the reduction from \transenum{} to \cdomenum{} in bipartite graphs is given in \autoref{sec:mcds-hardness}.
We discuss open questions in \autoref{sec:conclusion}.

\section{Preliminary notions}\label{sec:prelim}

We assume familiarity with standard terminology from graph and hypergraph theory, and refer to~\cite{diestel2012graph,berge1984hypergraphs} for notations not included below.

\myparagraph{Graphs.} 
All graphs considered in this paper are simple and finite.
Given $X \subseteq V(G)$, $G[X] \coloneq (X, \{uv \in E(G) \mid u,v \in X\})$ is the subgraph of $G$ \emph{induced} by $X$.
Given $v \in V(G)$, we denote by $N(v)$ its open neighborhood, by $N[v]$ its closed neighborhood, and define its \emph{distance-2 neighborhood} as $N^2(v)\coloneq N[N[v]]\setminus N[v]$.
The first two notions are naturally extended to subsets by $N[S]\coloneq \bigcup_{v\in S} N[S]$ and $N(S)\coloneq (\bigcup_{v\in S} N[S])\setminus S$.
An \emph{independent set} is a set of pairwise non-adjacent vertices.
A graph is \emph{bipartite} if its vertex set $V(G)$ can be partitioned into two independent sets $X,Y$.
It is called \emph{bipartite chain} if, in addition, there exist orderings $x_1,\dots,x_p$ and $y_1,\dots,y_q$ of $X$ and $Y$, respectively, such that $N(x_i)\subseteq N(x_j)$ for all $1\leq i\leq j\leq p$, and $N(y_i)\supseteq N(y_j)$ for all $1\leq i\leq j\leq q$.
Note that bipartite chain graphs are closed under taking induced subgraphs.

\myparagraph{Hypergraphs.}
Let $\H$ be a hypergraph.
Given a vertex $v\in V(\H)$, the set of edges that contain $v$, also called edges \emph{incident} to $v$, is denoted by $\inc(v)$.
The hypergraph \emph{induced} by a set $X\subseteq V(\H)$ of vertices is defined as $\H[X]=(X, \{E\in E(\H): E\subseteq X\})$.
A \emph{transversal} of $\H$ is a subset $T\subseteq V(\H)$ such that $E\cap T\neq \emptyset$ for all $E\in E(\H)$, i.e., every edge is incident to (or \emph{hit} by) a vertex of $T$.
It is \emph{minimal} if it is minimal by inclusion, or equivalently, if $T\setminus\{v\}$ is not a transversal for any $v\in T$.
The set of minimal transversals of $\H$ is denoted by $Tr(\H)$.
The \emph{private edges} of an element $v$ in a transversal $T$ are those edges of $\H$ which are only hit by $v$ in $T$; we denote the set of such edges by $\priv(T,v)$.
It is well-known (see e.g.,~\cite{berge1984hypergraphs}) that a set $T\subseteq V(\H)$ is a minimal transversal of $\H$ if and only if $T$ is a transversal and every $v$ in $T$ has a private edge. We formally state the problem of enumerating minimal transversals of a hypergraph as follows.

\begin{problemgen}
  \problemtitle{\textsc{Minimal Transversals Enumeration (\transenum{})}}
  \probleminput{A hypergraph $\H$.}
  \problemquestion{The set $Tr(\H)$.}
\end{problemgen}

Note that in the problem above, $\H$ can be considered inclusion-wise minimal (or \emph{Sperner}) as this does not change its set of minimal transversals, and removing non-minimal hyperedges can be performed in polynomial time \cite{eiter2008computational}.

\myparagraph{Domination.}
We recall the different notions of domination given in the introduction.
A \emph{dominating set} (or DS for short) in a graph $G$ is a set $D\subseteq V(G)$ such that $V(G)=N[D]$.
It is called \emph{connected} if the graph $G[D]$ is connected.
A \emph{total dominating set} in $G$ is a set $D\subseteq V(G)$ such that $V(G)=\bigcup_{v\in D} N(v)$.
A (resp.~connected, total) dominating set is said to be \emph{minimal} if it is inclusion-wise minimal.
We denote by $\mds(G)$ (resp.~$\mcds(G)$, $\mtds(G)$) the families of minimal such sets in a graph $G$.
\iflongelse{}{The problems of enumerating these families are our main interest in this paper.}

\iflongelse{
In this paper, we are interested in the following enumeration problems.

\begin{problemgen}
  \problemtitle{\textsc{Minimal DS Enumeration (\domenum{})}}
  \probleminput{A graph \(G\).}
  \problemquestion{The set $\mds(G)$.}
\end{problemgen}

\begin{problemgen}
  \problemtitle{\textsc{Minimal Total DS Enumeration (\tdomenum{})}}
  \probleminput{A graph \(G\).}
  \problemquestion{The set $\mtds(G)$.}
\end{problemgen}

\begin{problemgen}
  \problemtitle{\textsc{Minimal Connected DS Enumeration (\cdomenum{})}}
  \probleminput{A graph \(G\).}
  \problemquestion{The set $\mcds(G)$.}
\end{problemgen}}
{}

We recall that the precise complexity status of each of the problems above is widely open, the first one admitting a quasi-polynomial time algorithm~\cite{fredman1996complexity,kante2014split}, while the third is conjectured to be intractable~\cite{lorentz2015open}.
Their study has been initiated in~\cite{kante2014split}, where it is noted that each of the problems is a particular instance of \transenum{}: using the family of (closed and open) neighborhoods for the first two, and a more involved hypergraph for the third one, as formalized in \hyperref[prop:mds-closed-neighborhoods]{Propositions \ref*{prop:mds-closed-neighborhoods}}--\ref{prop:mcds-separators}.
See also~\cite{berge1984hypergraphs}.

\begin{proposition}\label{prop:mds-closed-neighborhoods}
    For every graph $G$, $\mds(G)=Tr(\{N[v] : v\in V(G)\})$.
\end{proposition}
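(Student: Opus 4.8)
The statement to prove is Proposition~\ref{prop:mds-closed-neighborhoods}: $\mds(G)=Tr(\{N[v] : v\in V(G)\})$. This is a set equality between the minimal dominating sets of $G$ and the minimal transversals of the hypergraph $\H$ whose vertex set is $V(G)$ and whose edges are exactly the closed neighborhoods $N[v]$ for $v \in V(G)$.

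\textbf{Plan.} The proof is a characterization unfolding, so I would prove the two families coincide by first showing the underlying (non-minimal) notions coincide, and then arguing that minimality is the same on both sides. The plan is to establish the single equivalence that, for any $D \subseteq V(G)$, $D$ is a dominating set of $G$ if and only if $D$ is a transversal of $\H \coloneq \{N[v] : v \in V(G)\}$. This equivalence will do almost all the work, because inclusion-wise minimality is defined identically (as a minimal set satisfying the property by inclusion) in both the domination and the transversal settings, and the two ambient vertex sets are the same, namely $V(G)$.

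\textbf{Key steps.} First I would unpack the definition of dominating set: $D$ dominates $G$ iff $V(G) = N[D]$, equivalently every vertex $v \in V(G)$ satisfies $v \in N[D]$, i.e.\ $v$ is dominated by some $d \in D$ with $d \in N[v]$. Next I would unpack the transversal condition: $D$ is a transversal of $\H$ iff $D \cap E \neq \emptyset$ for every edge $E \in E(\H)$, i.e.\ for every $v \in V(G)$ we have $D \cap N[v] \neq \emptyset$. The bridge is the elementary symmetry of closed neighborhoods, namely $d \in N[v] \iff v \in N[d]$ (both say $d=v$ or $dv \in E(G)$). Using this, $v \in N[D]$ is equivalent to the existence of $d \in D$ with $d \in N[v]$, which is exactly $D \cap N[v] \neq \emptyset$. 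Quantifying over all $v \in V(G)$ yields that $D$ is dominating iff $D$ is a transversal of $\H$. Finally, since both $\mds(G)$ and $Tr(\H)$ consist of the inclusion-minimal members of their respective (now identical) families of valid sets over the common ground set $V(G)$, the two collections of minimal sets coincide, giving $\mds(G)=Tr(\H)$.

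\textbf{Main obstacle.} Honestly there is no serious obstacle here; the result is folklore (as the excerpt notes, ``see also~\cite{berge1984hypergraphs}''), and the only point requiring any care is making the quantifier bookkeeping explicit rather than hand-waving the passage from ``dominating'' to ``transversal.'' The one subtlety worth stating cleanly is that minimality transfers for free: because the predicate ``is a dominating set'' and the predicate ``is a transversal of $\H$'' define the very same upward-closed family of subsets of $V(G)$, their inclusion-minimal elements are identical, so no separate argument about private edges versus private (dominated) vertices is needed—though one could equivalently phrase minimality via the private-edge characterization of minimal transversals recalled earlier, noting that a private edge $N[v]$ of $d$ corresponds exactly to a vertex $v$ dominated only by $d$ in $D$.
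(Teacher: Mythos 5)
Your proof is correct and is exactly the standard folklore argument the paper relies on; the paper itself states this proposition without proof, citing it as known (cf.~\cite{kante2014split,berge1984hypergraphs}). The key points — the symmetry $d\in N[v]\iff v\in N[d]$ and the observation that minimality transfers because the two upward-closed families over $V(G)$ coincide — are precisely what is needed.
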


\begin{proposition}\label{prop:mds-open-neighborhoods}
    For every graph $G$, $\mtds(G)=Tr(\{N(v) : v\in V(G)\})$.
\end{proposition}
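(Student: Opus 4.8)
The plan is to show that the two families coincide as plain collections of subsets of $V(G)$, from which the equality of their inclusion-minimal members follows immediately. The single fact doing all the work is the symmetry of adjacency, which lets me rephrase ``$D$ totally dominates $G$'' as ``$D$ is a transversal of the open neighborhoods of $G$''.

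First I would establish the key equivalence: for any $D \subseteq V(G)$, the set $D$ is a total dominating set of $G$ if and only if $N(u) \cap D \neq \emptyset$ for every $u \in V(G)$. Starting from the definition $V(G) = \bigcup_{v \in D} N(v)$, a vertex $u$ lies in the right-hand side exactly when there is some $v \in D$ with $u \in N(v)$; since $G$ is simple and undirected, $u \in N(v)$ holds iff $v \in N(u)$, so this is the same as $N(u) \cap D \neq \emptyset$. Quantifying over all $u \in V(G)$ turns the condition $V(G) = \bigcup_{v \in D} N(v)$ into ``$D$ meets $N(u)$ for every $u$'', which is precisely the definition of $D$ being a transversal of the hypergraph $\{N(v) : v \in V(G)\}$.

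With this equivalence in hand, the total dominating sets of $G$ and the transversals of $\{N(v) : v \in V(G)\}$ are literally the same subsets of $V(G)$. Consequently the inclusion-minimal elements of the two families coincide, and this is exactly the asserted identity $\mtds(G) = Tr(\{N(v) : v \in V(G)\})$.

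I do not expect a genuine obstacle; the one point deserving a word is the degenerate case of an isolated vertex $u$, for which $N(u) = \emptyset$ is an empty hyperedge. Such an edge is hit by no set, so the transversal family is empty; symmetrically, an isolated vertex can never be totally dominated, so $\mtds(G) = \emptyset$ as well, and the claimed equality holds vacuously. The whole argument is the exact analogue of \autoref{prop:mds-closed-neighborhoods}, with open neighborhoods $N(v)$ in place of closed neighborhoods $N[v]$; the only structural difference is that for closed neighborhoods the reflexive membership $v \in N[v]$ prevents empty edges, whereas for open neighborhoods one must tolerate them.
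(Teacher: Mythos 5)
Your proof is correct and is exactly the standard argument the paper relies on (the paper states this proposition without proof, citing it as known): symmetry of adjacency turns total domination into hitting every open neighborhood, so the two set families, and hence their minimal elements, coincide. Your remark about isolated vertices and empty hyperedges is a sensible sanity check and does not affect the validity of the identity.
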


\myparagraph{Minimal separators.}
A \emph{separator} in a graph $G$ is a subset $S\subseteq V(G)$ such that $G-S$ is disconnected. In the following, let $\S(G)\coloneq  \{S\subseteq V : G-S~\text{is not connected and}~\forall S' \subset S,  G - S'~\text{is connected}\}$ denote the set of (inclusion-wise) minimal separators of $G$.
Then we have the following.

\begin{proposition}[{\cite{kante2014split}}]\label{prop:mcds-separators}
    For every graph $G$, $\mcds(G)=Tr(\S(G))$.
\end{proposition}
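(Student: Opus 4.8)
The plan is to prove the two set-equalities $\mcds(G)=Tr(\S(G))$ by a direct double-inclusion argument, using the characterization of minimal transversals via private edges recalled in the preliminaries. The key structural fact I would use is the standard correspondence between connectivity of $G[D]$ and the separators of $G$: a set $D$ is a \emph{connected} dominating set precisely when $D$ is dominating and meets every minimal separator of $G$. I would first establish this correspondence as the technical heart of the argument, and then translate minimality on both sides.

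\textbf{Step 1: $D$ connected dominating $\iff$ $D$ is a transversal of $\S(G)$ (assuming $G$ connected).} First I would argue that if $D$ is a dominating set that hits every minimal separator, then $G[D]$ is connected. Suppose not; then $G[D]$ has at least two connected components, and I would use the domination property to find a genuine vertex cut of $G$ that avoids $D$, contradicting the transversal hypothesis. Concretely, take two vertices of $D$ in distinct components of $G[D]$; any path between them in $G$ must leave $D$, and one can extract from $V(G)\setminus D$ an inclusion-minimal separator disconnecting the two parts, which is then a minimal separator of $G$ disjoint from $D$. Conversely, if $G[D]$ is connected (and $D$ dominates), then $D$ must intersect every minimal separator $S$: removing $S$ disconnects $G$, so if $D\cap S=\emptyset$ then $D$ lies entirely in $G-S$; since $D$ dominates, every component of $G-S$ contains a vertex of $D$ or a neighbor of one, and connectivity of $G[D]$ together with domination forces all of $G-S$ to be connected through $D$, contradicting that $S$ separates. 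This equivalence is the step I expect to be the main obstacle, since it requires careful handling of how domination interacts with the minimality of the separator.

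\textbf{Step 2: matching the two minimality notions.} Having shown that the connected dominating sets of $G$ are exactly the transversals of $\S(G)$, it remains to check that the inclusion-minimal elements coincide on both sides. Since the two families of sets are equal as collections closed under nothing in particular, I would simply observe that minimality by inclusion is intrinsic to a family: an element is minimal in $\{D : D \text{ connected dominating}\}$ iff it is minimal in $\{T : T \text{ transversal of } \S(G)\}$, because these are the \emph{same} family by Step~1. Thus $\mcds(G)=Tr(\S(G))$ follows immediately, with the private-edge characterization available as a sanity check that a minimal transversal removes no vertex without losing domination or connectivity.

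\textbf{Step 3: disconnected and degenerate cases.} Finally I would dispose of boundary cases. If $G$ is disconnected it has no connected dominating set, and I would check that the convention makes $\S(G)$ (or the transversal notion) vacuously consistent, or restrict the statement to connected $G$ as is standard. I would also note the single-vertex/edge-case subtlety mentioned in the introduction—that a one-element connected dominating set is not a total dominating set—to confirm it does not affect the separator characterization. These cases are routine once Step~1 is in place, so the substance of the proof lies entirely in the connectivity-versus-separator equivalence.
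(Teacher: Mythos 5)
The paper does not prove this proposition; it is imported verbatim from \cite{kante2014split}. Judged on its own merits, your outline follows the standard route, but it has a genuine gap at its center. Your Step~1 only establishes the equivalence ``$D$ is a connected dominating set $\iff$ $D$ is dominating \emph{and} hits every minimal separator'': the forward direction is fine, and your backward direction explicitly starts from ``if $D$ is a \emph{dominating} set that hits every minimal separator, then $G[D]$ is connected.'' Step~2 then declares the two families identical, which requires the stronger claim that \emph{every} transversal of $\S(G)$ is automatically a dominating set --- otherwise a minimal transversal of $\S(G)$ could a priori fail to dominate and the inclusion $Tr(\S(G))\subseteq\mcds(G)$ collapses. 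This missing piece is exactly where the content of the proposition lives, and it does not come for free: you need to argue that for every non-universal vertex $v$ the set $N(v)$ is a separator of $G$, hence contains some $S\in\S(G)$, hence any transversal meets $N(v)$; universal vertices are handled by nonemptiness of the transversal. Without this, your Step~2 ``same family'' argument is circular.

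Two smaller points. First, your extraction of a minimal separator disjoint from $D$ is phrased as taking ``an inclusion-minimal separator disconnecting the two parts,'' i.e., a minimal $u$-$v$ separator --- precisely the object the paper warns need \emph{not} be a minimal separator of $G$. The correct move is to observe that $V(G)\setminus D$ is itself a separator (since $G[D]$ is disconnected) and to take a subset of it that is inclusion-minimal among \emph{all} separators of $G$; since every proper subset of such a set still lies in $V(G)\setminus D$, this really is an element of $\S(G)$ avoiding $D$ (one also gets connectivity of transversals this way, with no domination hypothesis needed). Second, your Step~3 misses the one boundary case that actually bites: for complete graphs $\S(G)=\emptyset$, so $Tr(\S(G))=\{\emptyset\}$ while $\mcds(K_n)$ consists of the singletons, so the statement needs $G$ connected and non-complete (or a convention to that effect) rather than ``every graph.''
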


Minimal separators are not to be confused with the slightly different notion of \emph{minimal $a$-$b$ separators} which are minimal sets $S$ such that $a$ and $b$ lie in different components of $G-S$.
Indeed, while every minimal separator is a minimal $a$-$b$ separator, the converse is not always true: minimal $a$-$b$ separators may contain another $c$-$d$ separator as a subset; see~\cite{golumbic2004algorithmic} for an early mention of this observation, and \cite{brosse2024hardness} for a thorough discussion on this matter.
A simple example is the cycle on four vertices with a pendant edge \cite{brosse2024hardness}.
Moreover, while enumerating $a$-$b$ minimal separators can be done with polynomial delay~\cite{shen1997efficient,kloks1998listing,berry2000generating}, enumerating minimal separators is a hard task~\cite{brosse2024hardness}.
However, and as is noted in \cite{kante2014split}, \cdomenum{} reduces to \transenum{} for graph classes admitting polynomially many $a$-$b$ separators.\footnote{%
    It should be noted that the statement \cite[Corollary 33]{kante2014split} presents an inaccuracy where by \emph{minimal separators} it should be understood \emph{minimal $a$-$b$ separators}. Indeed, listing minimal separators is hard~\cite{brosse2024hardness} while enumerating minimal $a$-$b$ separators is tractable~\cite{kloks1998listing}. 
    The present inaccuracy comes from an ambiguity and common confusion in the literature around this terminology \cite{shen1997efficient,kloks2011feedback}.}
Let $\G$ be the hypergraph where each hyperedge is an $a$-$b$ minimal separator of $G$; using the above-cited algorithms for enumerating $a$-$b$ minimal separators, we can construct $\G$ in polynomial time on $|\G| + |V(G)| + |E(G)|$; moreover, note that $\S(G)=\Min_\subseteq \G$.
Indeed, recall that $Tr(\H)=Tr(\G)$ for any pair of hypergraphs $\H,\G$ such that for any $E\in E(\G)$ there exists $E'\in E(\H)$ with $E'\subseteq E$. In other words, $Tr(\H)=Tr(\Min_\subseteq \H)$ for any hypergraph $\H$~\cite{berge1984hypergraphs}.

\myparagraph{Chordal bipartite graphs.}
A bipartite graph is \emph{chordal bipartite} if it does not contain induced cycles of length greater than four~\cite{golumbic2004algorithmic}.
Note that chordal bipartite graphs are \emph{not} chordal, as they may contain the induced cycle on four vertices.
This terminology should be regarded as an analogue to bipartite graphs of chordality, as bicliques are to cliques.

A vertex $v$ in a graph $G$ is called \emph{weak-simplicial} if $N(v)$ is an independent set, and for any two $x,y\in N(v)$, we have $N(x)\subseteq N(y)$ or $N(y)\subseteq N(x)$; two such vertices $x,y$ are said to be \emph{comparable}.
Note that this is precisely to say that the graph $G[N(v)\cup N^2(v)]$ induced by the vertices at distance at most $2$ from $v$ is a bipartite chain; see~\cite[Lemma~3]{kurita2019efficient}.
A \emph{weak-simplicial elimination ordering} of $G$ is an ordering $v_1,\dots,v_n$ of its vertices such that, for all $1\leq i\leq n$, $v_i$ is weak-simplicial in $G_i = G\left[\{v_1, \dots, v_i\}\right]$. 
The following characterization is due to \cite{kurita2019efficient} and may be regarded as a simplification of an earlier characterization and recognition algorithm by Uehara~\cite{uehara2002linear}. 

\begin{proposition}[{\cite{uehara2002linear,kurita2019efficient}}]\label{prop:weak-seo}
    A graph is chordal bipartite if and only if it admits a weak-simplicial elimination ordering, which can be computed in polynomial time.
\end{proposition}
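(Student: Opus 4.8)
The statement is the bipartite analogue of the classical characterization of chordal graphs by perfect elimination orderings, with ``simplicial vertex'' replaced by ``weak-simplicial vertex''. The plan is therefore to mirror the chordal proof: establish the two implications separately, and then argue polynomial-time computability. Throughout I will use that in a bipartite graph every neighbourhood is automatically independent, so the entire content of ``weak-simplicial'' reduces to finding a vertex $v$ whose neighbours have pairwise \emph{nested} (totally ordered by inclusion) neighbourhoods, equivalently a vertex $v$ for which $G[N(v)\cup N^2(v)]$ is a bipartite chain.

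For the direction asserting that a weak-simplicial elimination ordering forces chordal bipartiteness, I would argue by contradiction using the highest-indexed vertex of a forbidden configuration. Given an ordering $v_1,\dots,v_n$ and a \emph{chordless} cycle $C$ of length other than four, let $v_i$ be the vertex of $C$ of largest index; then all of $C$ lies in $G_i$, where $v_i$ is weak-simplicial. Writing $a,b$ for the two neighbours of $v_i$ on $C$, independence of $N_{G_i}(v_i)$ gives $a\not\sim b$, which already excludes triangles. If $C$ has length at least five, then comparability of $N_{G_i}(a)$ and $N_{G_i}(b)$, say $N_{G_i}(a)\subseteq N_{G_i}(b)$, forces $b$ to be adjacent to the cycle-neighbour of $a$ distinct from $v_i$; as the cycle has length at least five this is a chord, a contradiction. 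Hence every chordless cycle is a $4$-cycle, so $G$ has no induced cycle longer than four and, since a shortest odd cycle would be a chordless cycle of odd (hence non-four) length, $G$ has no odd cycle and is bipartite. This direction is routine.

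The reverse direction rests on two facts: that chordal bipartiteness is hereditary under induced subgraphs (immediate from the definition, analogously to the remark on bipartite chain graphs in \autoref{sec:prelim}), and that every nonempty chordal bipartite graph contains a weak-simplicial vertex. Granting the latter, I would peel off weak-simplicial vertices one at a time -- remove such a vertex, invoke heredity to keep the remaining graph chordal bipartite, and recurse -- so that reversing the order of removal yields the desired ordering, since the $j$-th removed vertex is weak-simplicial in the graph induced by the vertices not yet removed. The existence of a weak-simplicial vertex is the crux and where I expect the real work to be.

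My preferred route to existence is through an ordering of the biadjacency matrix $M$ of $G$, with rows indexed by one side $X$ and columns by the other side $Y$: a doubly lexical ordering of the rows and columns always exists and is polynomial-time computable, and a bipartite graph is chordal bipartite exactly when, in such an ordering, $M$ is $\Gamma$-free, i.e.\ has no $2\times2$ submatrix of the pattern $\left(\begin{smallmatrix} 1 & 1 \\ 1 & 0 \end{smallmatrix}\right)$ (with rows and columns in the induced order). In a $\Gamma$-free ordering the first row $x_1$ is weak-simplicial: if $y_j,y_{j'}\in N(x_1)$ with $j<j'$, then for every later row $x_{i'}$ the entries $M_{1,j}=M_{1,j'}=M_{i',j}=1$ force $M_{i',j'}=1$, so $N(y_j)\subseteq N(y_{j'})$ and the neighbours of $x_1$ have nested neighbourhoods. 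Since every submatrix of a $\Gamma$-free matrix is $\Gamma$-free, deleting $x_1$ and repeating exhibits $x_2,x_3,\dots$ as weak-simplicial in the successive subgraphs, so eliminating the rows in order $x_1,\dots,x_{|X|}$ and then the now-isolated vertices of $Y$ produces the entire ordering at once. The main obstacle is thus the equivalence ``chordal bipartite $\Leftrightarrow$ $\Gamma$-free'', which I would either cite from the theory of totally balanced matrices or prove directly via the same highest-index cycle argument used above. Finally, polynomial-time computability holds even without the matrix machinery: weak-simpliciality of a given vertex is testable in polynomial time (check independence of its neighbourhood and pairwise comparability), so one may simply scan for a weak-simplicial vertex and peel, repeating $n$ times.
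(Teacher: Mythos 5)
The paper does not prove \autoref{prop:weak-seo} at all: it is imported verbatim from \cite{uehara2002linear,kurita2019efficient}, so there is no in-paper argument to compare against, and your attempt has to be judged on its own. On those terms it is essentially sound and, in fact, reconstructs the route taken in the cited references: the ``ordering $\Rightarrow$ chordal bipartite'' direction via the highest-indexed vertex of a chordless cycle is complete and correct (independence of $N(v_i)$ kills triangles, comparability of the two cycle-neighbours produces a chord in any chordless cycle of length at least five, and a shortest odd cycle being chordless gives bipartiteness), and the extraction of a weak-simplicial vertex from the first row of a $\Gamma$-free ordering of the biadjacency matrix, together with heredity and peeling, is exactly Uehara's argument. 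The one point where you are too casual is the pivotal equivalence ``chordal bipartite $\Leftrightarrow$ admits a $\Gamma$-free (doubly lexical) ordering'': your highest-index cycle argument only yields the direction ``$\Gamma$-free orderable $\Rightarrow$ no long induced cycles''; the converse --- that every doubly lexical ordering of the biadjacency matrix of a chordal bipartite graph is $\Gamma$-free --- is the substantial theorem of Lubiw and Anstee--Farber--Hoffman--Kolen--Sakarovitch on totally balanced matrices and cannot be obtained by that argument, so it must genuinely be cited (which you do offer as an alternative, and which is legitimate given that the proposition itself is a cited result). Your closing observation that one can instead just test weak-simpliciality vertex by vertex and peel gives polynomial-time computability, but note that its correctness still rests on the existence of a weak-simplicial vertex in every nonempty chordal bipartite graph, i.e.\ on that same external theorem.
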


Another characterization of chordal bipartite graphs based on separators is given in \cite{golumbic1978perfect}, where it is proved that a bipartite graph is chordal bipartite if and only if every minimal \emph{edge} separator induces a complete bipartite subgraph.
In~\cite{kloks2011feedback}, the following analogue of the forward implication is given for minimal $a$-$b$ separators (hence for minimal separators). 

\begin{proposition}[\cite{kloks2011feedback}]\label{prop:chbip-vertex-sep}
    If $G$ is chordal bipartite and $S$ is an $a$-$b$ minimal separator then $G[S]$ is complete bipartite.
\end{proposition}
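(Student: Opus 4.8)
The plan is to prove \autoref{prop:chbip-vertex-sep} directly, by showing that no induced path of length two can occur entirely inside $S$, and then using bipartiteness to conclude that $G[S]$ is complete bipartite. Suppose toward a contradiction that $G[S]$ is not complete bipartite. Since $G$ is bipartite, so is $G[S]$, say with parts $S\cap X$ and $S\cap Y$. Failing to be complete bipartite means there exist $u\in S\cap X$ and $w\in S\cap Y$ that are non-adjacent. The first key step is to leverage the fact that $S$ is a \emph{minimal} $a$-$b$ separator: every vertex of $S$ has a neighbor in the component $C_a$ of $G-S$ containing $a$, and a neighbor in the component $C_b$ containing $b$ (otherwise that vertex could be removed from $S$ and $S$ would still separate $a$ from $b$, contradicting minimality).

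The heart of the argument is to build a short cycle through $C_a$ and $C_b$ that must be chordless, contradicting chordal bipartiteness. Concretely, since $u$ and $w$ both have neighbors in the connected set $C_a$, I would take a shortest path $P_a$ from $u$ to $w$ whose internal vertices all lie in $C_a$; similarly a shortest path $P_b$ from $u$ to $w$ with internal vertices in $C_b$. Because $u,w$ are non-adjacent, each path has length at least two, and concatenating $P_a$ with the reverse of $P_b$ yields a cycle of length at least four passing through both sides. The main obstacle, and the step requiring the most care, is ensuring this cycle is \emph{induced} (chordless): one must choose the paths to be shortest among those with the prescribed internal vertices, and argue that no chord can exist. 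A chord within $P_a$ or within $P_b$ is ruled out by shortest-path minimality; a chord between an internal vertex of $P_a$ and an internal vertex of $P_b$ is impossible because those vertices lie in different components $C_a,C_b$ of $G-S$ and hence are non-adjacent; and a chord from $u$ or $w$ to an internal vertex again contradicts the shortest-path choice. Thus the cycle is induced of length at least four, and since $u,w$ are non-adjacent its length is in fact at least four with no shortcut, i.e.\ it is an induced $C_{\geq 4}$; if its length exceeds four we already contradict chordal bipartiteness, and a careful choice of nearest common neighbors forces the length to exceed four precisely when $u,w$ are non-adjacent.

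More cleanly, I would phrase the contradiction as follows: any non-adjacent pair $u\in S\cap X$, $w\in S\cap Y$ together with the two disjoint connecting paths through $C_a$ and $C_b$ yields an induced cycle of length at least six (since bipartiteness forces even length, and length four would require $u,w$ adjacent through a common neighbor, which a careful minimal choice excludes), contradicting the hypothesis that $G$ has no induced cycle of length greater than four. Hence every such pair $u,w$ must be adjacent, so $G[S]$ is complete bipartite. I expect the delicate bookkeeping to be in formalizing the ``minimal choice'' that upgrades a length-four cycle into a genuine obstruction; this can be handled by selecting $u,w$ to be a non-adjacent pair minimizing the combined path length, so that any induced $C_4$ would already certify adjacency and no longer induced cycle can survive in a chordal bipartite graph. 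This final minimality argument is the step I anticipate will demand the most precision.
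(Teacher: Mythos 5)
The paper itself does not prove \autoref{prop:chbip-vertex-sep}; it is quoted from the literature, so your attempt can only be judged on its own merits. Your overall strategy is the standard one and is sound up to the last step: every vertex of a minimal $a$-$b$ separator has a neighbor in both close components $C_a$ and $C_b$; taking shortest $u$--$w$ paths $P_a$ and $P_b$ with internal vertices in $C_a$ and $C_b$ respectively and gluing them yields a cycle that is induced, since chords inside either path contradict shortest-path minimality and chords between the two paths are impossible as their interiors lie in distinct components of $G-S$. All of that is fine.

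The gap is in the step you yourself flag as delicate: ruling out that the resulting induced cycle is a $C_4$. Your proposed fix --- choosing $u,w$ to be a non-adjacent pair ``minimizing the combined path length'' so that ``any induced $C_4$ would already certify adjacency'' --- does not work as stated: an induced $C_4$ is perfectly legal in a chordal bipartite graph, and minimizing over pairs does not convert its existence into a contradiction or into adjacency of $u$ and $w$. The correct observation is much simpler and you never state it: since $u\in S\cap X$ and $w\in S\cap Y$ lie on \emph{opposite} sides of the bipartition of $G$, they have no common neighbor, so every $u$--$w$ path has odd length; non-adjacency then forces both $P_a$ and $P_b$ to have length at least $3$, hence the induced cycle has length at least $6$, contradicting chordal bipartiteness outright. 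With that parity remark inserted in place of your ``careful minimal choice'' paragraph, the proof closes; without it, the argument as written does not exclude the harmless $C_4$ case and is incomplete.
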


In the same paper, the authors show that chordal bipartite graphs have at most $O(n+m)$ minimal $a$-$b$ separators \cite{kloks2011feedback}.
Since they can be listed with polynomial delay~\cite{kloks1998listing}, we derive a polynomial time procedure listing $a$-$b$ minimal separators and filtering out the non-minimal ones.
We formalize this in the next statement that will be of use in \autoref{sec:mcds}.

\begin{proposition}\label{prop:chbip-sep-enum}
    Let $G$ be an $n$-vertex chordal bipartite graph. Then its set $\S(G)$ of minimal separators can be enumerated in polynomial time in $n$.
\end{proposition}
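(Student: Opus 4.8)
The plan is to combine the two ingredients already assembled in the preliminaries. The key observation, stated just before the proposition, is that $\S(G)=\Min_\subseteq \G$, where $\G$ is the hypergraph whose hyperedges are the $a$-$b$ minimal separators of $G$ (ranging over all pairs $a,b$). So it suffices to produce $\G$ in time polynomial in $n$, and then to extract its inclusion-wise minimal members.

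First I would bound the size of $\G$. By the result of Kloks and Kratsch \cite{kloks2011feedback} cited above, a chordal bipartite graph has only $O(n+m)$ minimal $a$-$b$ separators, and since $m\leq n^2$ this is polynomial in $n$. Second, I would invoke the polynomial-delay enumeration algorithm for minimal $a$-$b$ separators of \cite{kloks1998listing}: running it (for instance over all pairs $a,b$, or using a global listing variant) and collecting the output yields the full family $\G$ in time polynomial in $|\G|+n+m$, hence polynomial in $n$. At this point $\G$ is an explicit hypergraph of polynomially many hyperedges on $n$ vertices.

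The final step is to filter out the non-minimal hyperedges, i.e., to compute $\Min_\subseteq \G$. This is a routine operation: for each hyperedge $E\in E(\G)$, check whether some other $E'\in E(\G)$ satisfies $E'\subsetneq E$, and discard $E$ if so. Each pairwise containment test costs $O(n)$, there are $O(|\G|^2)$ pairs, so the whole filtering runs in time polynomial in $n$. By \autoref{prop:mcds-separators} and the identity $\S(G)=\Min_\subseteq \G$ recorded above, the resulting family is exactly $\S(G)$, which establishes the claim.

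I do not expect a genuine obstacle here: the proposition is essentially a bookkeeping corollary that packages \cite{kloks1998listing,kloks2011feedback} together with the $Tr(\H)=Tr(\Min_\subseteq \H)$ remark. The only point requiring a little care is to make sure the enumeration of $a$-$b$ minimal separators is invoked correctly — the algorithm of \cite{kloks1998listing} lists minimal $a$-$b$ separators for fixed $a,b$, so to obtain $\G$ one sweeps over all (or enough) pairs $a,b$, and the polynomial bound of \cite{kloks2011feedback} guarantees the total output stays polynomial. I would state this explicitly to avoid the terminological confusion (minimal separators versus minimal $a$-$b$ separators) flagged in the footnote earlier in the excerpt.
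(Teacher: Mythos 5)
Your proposal is correct and follows exactly the paper's argument: list the $a$-$b$ minimal separators via \cite{kloks1998listing}, use the $O(n+m)$ bound of \cite{kloks2011feedback} to keep the output polynomial, and filter out non-minimal members using the identity $\S(G)=\Min_\subseteq \G$. The only cosmetic remark is that the appeal to \autoref{prop:mcds-separators} in your last step is unnecessary — the identity $\S(G)=\Min_\subseteq \G$ alone suffices.
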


Finally, we will need two extra properties of minimal separators in general and in chordal bipartite graphs.
Let \(S\) be a separator of \(G\) and $C$ be a component of $G-S$. We say that \(C\) is \emph{close} to \(S\) if every vertex in \(S\) has at least one neighbor in \(C\).
It is folklore that minimal $a$-$b$ separators $S$ are characterized by the existence of two close components in $G-S$; see~\cite{shen1997efficient,golumbic2004algorithmic,kloks2011feedback} or \cite{kloks1994finding} for a proof.
We derive one implication for minimal separators since, in particular, they are $a$-$b$ minimal separators for some specific pairs $a,b$ of vertices. 
Note however that the converse is not true.

\begin{proposition}\label{prop:close}
    Let $G$ be a graph and $S$ be a minimal separator of $G$.
    Then there exist two components $C_1,C_2$ in $G-S$ that are close to $S$.
\end{proposition}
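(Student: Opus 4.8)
The plan is to reduce the statement to the folklore characterization of minimal $a$-$b$ separators recalled just above, namely that a minimal $a$-$b$ separator $S$ is characterized by the existence of two close components in $G-S$. Concretely, I would first exhibit a pair $a,b$ for which $S$ is a minimal $a$-$b$ separator, and then read off the two close components directly from that characterization.

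Since $S$ is a separator, $G-S$ has at least two components; pick vertices $a$ and $b$ lying in two distinct components of $G-S$. Then $S$ is an $a$-$b$ separator. The key step is to upgrade this to minimality: $S$ contains some minimal $a$-$b$ separator $S^\ast \subseteq S$, obtained by taking any inclusion-minimal subset of $S$ that still separates $a$ from $b$. But any $a$-$b$ separator is in particular a separator of $G$, so $S^\ast$ is a separator of $G$ with $S^\ast \subseteq S$; minimality of $S$ as a separator then forces $S^\ast = S$. Hence $S$ itself is a minimal $a$-$b$ separator.

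With this in hand, the folklore characterization immediately yields the existence of two close components in $G-S$, which proves the claim with $C_1$ and $C_2$ taken to be these components.

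The only point requiring care --- and the crux of the argument --- is the implication that $S^\ast$ separating $a$ from $b$ makes $S^\ast$ a separator of $G$, together with the use of minimality of $S$ to collapse $S^\ast$ back onto $S$; everything else is bookkeeping, and the converse fails precisely because an arbitrary minimal $a$-$b$ separator need not be an inclusion-minimal separator. I would also note that the statement admits a self-contained proof avoiding the $a$-$b$ machinery entirely: for each $s\in S$, minimality of $S$ gives that $G-(S\setminus\{s\})$ is connected, and since the components of $G-S$ are pairwise non-adjacent and $s$ is the only vertex available to reconnect them, $s$ must have a neighbor in \emph{every} component of $G-S$. This shows that in fact every component of $G-S$ is close to $S$, which is stronger than what is asked.
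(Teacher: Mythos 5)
Your main argument is correct and is essentially the paper's own: the paper obtains this proposition by observing that a minimal separator is in particular a minimal $a$-$b$ separator for a suitable pair $a,b$ and then invoking the folklore characterization of minimal $a$-$b$ separators via two close components; you supply exactly the details the paper leaves implicit, namely that any inclusion-minimal $a$-$b$ separator $S^\ast\subseteq S$ is itself a separator of $G$, so the inclusion-minimality of $S$ as a separator forces $S^\ast=S$. Your closing self-contained argument is a genuinely different and arguably preferable route: it bypasses the $a$-$b$ machinery entirely and proves the stronger fact that \emph{every} component of $G-S$ is close to $S$, since for each $s\in S$ the graph $G-(S\setminus\{s\})$ must be connected and $s$ is the only vertex available to link the pairwise non-adjacent components of $G-S$. (The singleton case $S=\{s\}$ is covered because the paper's definition of minimal separator, applied with $S'=\emptyset$, already forces $G$ to be connected.) Either version is valid; the second is more elementary and yields more than the statement asks for.
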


Finally, the following will be of use in Section~\ref{sec:mcds}. We state it for minimal separators but this more generally holds for minimal $a$-$b$ separators~\cite{kloks2011feedback}.

\begin{proposition}[{\cite{kloks2011feedback}}]\label{prop:close-neighbor}
Let $G$ be chordal bipartite with bipartition $(A,B)$. Let $S$ be a minimal separator of $G$ and $C$ be a component of $G-S$ that is close to $S$. 
If $S\cap A\neq\emptyset$ then there exists a vertex $x$ in $C$ with
$N(x) \cap S = S \cap A$. 
\end{proposition}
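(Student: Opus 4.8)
The plan is to reduce the statement to a common-neighbor (Helly-type) claim and then prove the latter by forbidding long induced cycles. Write $S_A \coloneq S\cap A$ and $S_B \coloneq S\cap B$; by \autoref{prop:chbip-vertex-sep} the set $S$ induces a complete bipartite graph with sides $S_A$ and $S_B$. Since $G$ is bipartite, any vertex $x$ adjacent to a vertex of $S_A$ lies in $B$, so $N(x)\subseteq A$ and hence $N(x)\cap S\subseteq S_A$ automatically. Thus it suffices to produce a single vertex $x\in C\cap B$ with $S_A\subseteq N(x)$: for such an $x$ we get $N(x)\cap S=S_A$, as required. Note that $C\cap B\neq\emptyset$, as $S_A\neq\emptyset$ and any $s\in S_A$ has a neighbor in $C$ (because $C$ is close), necessarily in $B$.

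First I would establish that \emph{any two vertices $s,s'\in S_A$ admit a common neighbor in $C$}. Since $S$ is a minimal separator, \autoref{prop:close} provides two distinct close components, at least one of which, say $C'$, differs from $C$. As $C$ and $C'$ are both close, $s$ and $s'$ each have neighbors in $C$ and in $C'$; since these components are connected, there is a shortest $s$--$s'$ path $P$ whose internal vertices lie in $C$, and a shortest $s$--$s'$ path $P'$ whose internal vertices lie in $C'$. Both are induced and have even length at least $2$ (as $s,s'\in A$ are non-adjacent), and they are internally disjoint. Their union is an induced cycle: there are no edges between $C$ and $C'$, no chords inside $P$ or $P'$ by minimality, and $s\not\sim s'$. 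Because chordal bipartite graphs contain no induced cycle of length at least $6$, this cycle has length exactly $4$, forcing $P$ to have length $2$, i.e.\ a common neighbor of $s$ and $s'$ in $C$.

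Next I would boost this pairwise property to a global common neighbor by induction on $|R|$ over subsets $R\subseteq S_A$: \emph{if every two vertices of $R$ have a common neighbor in $C$, then all of $R$ has one}. The cases $|R|\le 2$ are immediate. For $|R|=r\ge 3$, each $R\setminus\{s\}$ satisfies the hypothesis, so by induction it has a common neighbor $u_s\in C\cap B$. If some $u_s$ is moreover adjacent to $s$ we are done; otherwise $u_s\not\sim s$ while $u_s$ is adjacent to every other vertex of $R$. Picking any three vertices $s_1,s_2,s_3\in R$, the six vertices $s_1,u_{s_3},s_2,u_{s_1},s_3,u_{s_2}$ induce a copy of $K_{3,3}$ minus a perfect matching, that is, an induced $6$-cycle, again contradicting chordal bipartiteness. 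Applying this with $R=S_A$ yields the desired $x\in C\cap B$ with $S_A\subseteq N(x)$, and hence $N(x)\cap S = S_A$, completing the proof.

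The main obstacle is the verification in the second step that the union of $P$ and $P'$ is genuinely \emph{induced}: this is where the complete-bipartite structure of $S$ (\autoref{prop:chbip-vertex-sep}) and, crucially, the existence of a \emph{second} close component (\autoref{prop:close}) are used to exclude chords. Once the pairwise claim is in place, the Helly-type boosting is clean, since the only possible obstruction to a global common neighbor is precisely the induced $6$-cycle forbidden in the class.
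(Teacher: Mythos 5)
The paper states this proposition without proof, importing it directly from Kloks et al., so there is no in-paper argument to compare against; your proposal must stand on its own, and it does. The reduction to finding a single $x\in C\cap B$ with $S_A\subseteq N(x)$ is valid (bipartiteness gives $N(x)\cap S\subseteq S_A$ for free), the pairwise common-neighbor claim is correctly derived by joining two induced $s$--$s'$ paths through $C$ and through a second close component $C'$ (whose existence is exactly what \autoref{prop:close} supplies) into an induced cycle that must have length four, and the Helly-type induction is sound: the vertices $u_{s_1},u_{s_2},u_{s_3}$ are pairwise distinct and, together with $s_1,s_2,s_3$, induce $K_{3,3}$ minus a perfect matching, i.e.\ an induced $C_6$, which is forbidden. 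Two cosmetic remarks. First, your closing paragraph credits the complete-bipartite structure of $S$ (\autoref{prop:chbip-vertex-sep}) with excluding chords in $P\cup P'$, but that structure is never actually used: the only vertices of $S$ on the cycle are $s,s'\in A$, which are non-adjacent by bipartiteness alone, and internal vertices lie in distinct components of $G-S$; the real ingredients are \autoref{prop:close} and $C_{\geq 6}$-freeness. Second, the base case $|R|=1$ of the boosting step is not implied by the (vacuous) pairwise hypothesis but by the closeness of $C$; this matters when the claim is applied with $|S_A|=1$, so it deserves an explicit word.
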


\section{The sequential method}\label{sec:sequential-approach}

We describe the \emph{sequential method}.
As by \hyperref[prop:mds-closed-neighborhoods]{Propositions \ref*{prop:mds-closed-neighborhoods}}--\ref{prop:mcds-separators}, 
the problems \domenum{}, \tdomenum{}, and \cdomenum{} can be seen as particular instances of \transenum{} we choose to follow the presentation of~\cite{bartier2024hypergraph}, and refer the reader to~\cite{eiter2003new,elbassioni2008some,bonamy2020kt} for more details on this approach.

Let \(\H \subseteq 2^{V}\) be a hypergraph on vertex set \(V=\{v_{1}, \dots, v_{n}\}\). 
For \(1\leq i\leq n\) we denote by \(V_{i} \coloneq  \{v_{1}, \dots, v_{i}\}\) the set of its first \(i\) vertices, by \(\H_{i} \coloneq  \H[V_{i}]\) the hypergraph induced by \(V_{i}\), and set \(V_0\coloneq \H_{0} \coloneq  Tr(\H_{0}) \coloneq  \emptyset\). 
In addition, for \(S \subseteq V_i\) and \(v \in S\), we note \(\priv_{i}(S, v)\) the set of private edges of \(v\) with respect to $S$ in \(\H_{i}\).
Given a minimal transversal \(T \in Tr(\H_{i+1})\) for $i<n$, we denote by \(\parent(T, i + 1)\) the \emph{parent of \(T\) with respect to \(i + 1\)} obtained after repeatedly removing from $T$ the vertex of smallest index \(v\) such that \(\priv_{i}(T, v) = \emptyset\). 
By definition, every \(T\) has a unique parent. 
Lastly, we denote by \(\children(T^{\star}, i)\), for \(T^{\star} \in Tr(\H_{i})\), the family of \(T \in Tr(\H_{i+1})\) such that \(T^{\star} = \parent(T, i + 1)\). 

The next two lemmas are central to the description of the algorithm: they define a tree structure where nodes are pairs $(T,i)$ with $T\in Tr(\H_i)$ and that will be traversed in a DFS manner in order to produce every leaf $T\in Tr(\H)=Tr(\H_n)$ as a solution.
We refer to \cite{bartier2024hypergraph} for the proofs of these statements, which follow from the definition of the $\parent$ relation. 

\begin{lemma}
    Let \(0 \leq i \leq n - 1\) and \(T \in Tr(\H_{i+1})\), then \(\parent(T, i + 1) \in Tr(\H_{i})\).
\end{lemma}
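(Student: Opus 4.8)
The plan is to show that the parent operation, which repeatedly deletes the smallest-index vertex whose private edge set (in $\H_i$) is empty, transforms a minimal transversal $T$ of $\H_{i+1}$ into a minimal transversal of $\H_i$. I would first establish that $T$, viewed as a subset of $V_{i+1}$, is at least a transversal of $\H_i$, and then argue that the deletion process yields exactly a \emph{minimal} transversal of $\H_i$. The key observation is that $\H_i = \H[V_i]$ differs from $\H_{i+1} = \H[V_{i+1}]$ only in that $\H_{i+1}$ may contain additional edges, namely those edges $E \in E(\H)$ with $v_{i+1} \in E \subseteq V_{i+1}$. Since $T \in Tr(\H_{i+1})$ hits every edge of $\H_{i+1}$ and $E(\H_i) \subseteq E(\H_{i+1})$, the set $T$ hits every edge of $\H_i$ as well, so $T \cap V_i$ is a transversal of $\H_i$ (note that whether or not $v_{i+1} \in T$, removing it cannot un-hit any edge of $\H_i$, as no edge of $\H_i$ contains $v_{i+1}$).

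Next I would analyze the iterative removal. The parent is defined by repeatedly removing the smallest-index vertex $v$ with $\priv_i(T,v) = \emptyset$; I would argue that this process, starting from a transversal of $\H_i$, terminates at a set that is still a transversal of $\H_i$ and in which every remaining vertex has a nonempty private edge set, which by the standard characterization (a transversal is minimal iff every vertex has a private edge, stated earlier in the excerpt) means the result lies in $Tr(\H_i)$. The crucial point is an invariant: removing a vertex $v$ with $\priv_i(S,v) = \emptyset$ from a transversal $S$ of $\H_i$ produces a smaller set that is \emph{still} a transversal of $\H_i$, precisely because $v$ had no edge that only it covered, so every edge formerly hit by $v$ is also hit by some other vertex of $S \setminus \{v\}$. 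Thus transversality is preserved at each step, and since the ground set is finite the process halts.

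At termination, no remaining vertex has an empty private edge set with respect to the final set, so every vertex has a private edge; combined with transversality this gives a minimal transversal of $\H_i$, i.e.\ an element of $Tr(\H_i)$. The one subtlety I would be careful about is that the private sets $\priv_i(\cdot,\cdot)$ are computed in $\H_i$ throughout, and that the removal of $v_{i+1}$ itself (if present in $T$) is handled correctly: since $v_{i+1} \notin V_i$, it can have no private edges in $\H_i$ and will be among the vertices eligible for removal, so it is automatically stripped out, ensuring the parent is a subset of $V_i$ as required.

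The main obstacle, and the part deserving the most care, is verifying that the \emph{order} of removal (always the smallest index first) does not affect the final outcome or that, regardless of order, transversality in $\H_i$ is maintained at every intermediate step. I expect this to reduce to the monotonicity fact that if a vertex has no private edge in a transversal, it still has no private edge (hence remains removable or its status can only be re-evaluated consistently) as other no-private-edge vertices are removed; making this precise is the technical heart. Since the excerpt explicitly defers to \cite{bartier2024hypergraph} for these statements and notes they follow directly from the definition of $\parent$, I would keep the argument short, emphasizing the transversality-preservation invariant and the private-edge characterization of minimality rather than grinding through the order-independence in full detail.
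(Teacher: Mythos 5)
Your proof is correct. The paper gives no proof of this lemma at all---it defers to \cite{bartier2024hypergraph}, noting the statement follows from the definition of $\parent$---and your argument (that $T$ is a transversal of $\H_i$ because $E(\H_i)\subseteq E(\H_{i+1})$ and no edge of $\H_i$ contains $v_{i+1}$; that deleting a vertex with empty private edge set preserves transversality; and that at termination every remaining vertex has a private edge, which by the stated characterization yields minimality) is precisely the standard argument the cited reference uses. Your concern about order-independence of the removals is unnecessary for this particular statement: the smallest-index rule makes $\parent(T,i+1)$ well defined, and only the per-step transversality invariant is needed.
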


\begin{lemma}\label{lem:at-least-one-child}
    Let \(0 \leq i \leq n - 1\) and \(T^{\star} \in Tr(\H_{i})\), then either:
    \begin{itemize}
        \item \(T^{\star} \in Tr(\H_{i+1})\) and consequently \(\parent(T^{\star}, i + 1) = T^{\star}\); or
        \item \(T^{\star} \cup \{v_{i+1}\} \in Tr(\H_{i+1})\) and \(\parent(T^{\star} \cup \{v_{i+1}\}, i + 1) = T^{\star}\).
    \end{itemize} 
\end{lemma}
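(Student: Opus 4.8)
The plan is to argue by a dichotomy that mirrors the two cases of the statement, namely whether or not $T^\star$ is already a transversal of $\H_{i+1}$. The single observation driving everything is that the edges of $\H_{i+1}$ that are not edges of $\H_i$ are exactly those containing $v_{i+1}$: indeed $E(\H_{i+1})=\{E\in E(\H):E\subseteq V_{i+1}\}$ and $E(\H_i)=\{E\in E(\H):E\subseteq V_i\}$, so a new edge is one contained in $V_{i+1}$ but not in $V_i$, which is precisely an edge using the single extra vertex $v_{i+1}$. In particular, $v_{i+1}$ belongs to no edge of $\H_i$, whence $\priv_i(S,v_{i+1})=\emptyset$ for every $S$, and for any $v\neq v_{i+1}$ the set $\priv_i(S,v)$ is unchanged when $v_{i+1}$ is added to or removed from $S$ (privateness in $\H_i$ depends only on $S\cap V_i$ and the fixed edge set $E(\H_i)$). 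These two facts are what make the parent computation immediate in both branches.

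First I would treat the case where $T^\star$ is a transversal of $\H_{i+1}$. Since $T^\star\in Tr(\H_i)$, each $v\in T^\star$ owns an edge $E_v\in E(\H_i)\subseteq E(\H_{i+1})$ with $E_v\cap T^\star=\{v\}$; this witnesses that $T^\star$ is not merely a transversal but a \emph{minimal} one in $\H_{i+1}$, so $T^\star\in Tr(\H_{i+1})$. For the parent, note that every $v\in T^\star$ has $\priv_i(T^\star,v)\neq\emptyset$ (again by minimality of $T^\star$ in $\H_i$), so the removal process in the definition of $\parent(\cdot,i+1)$ finds no eligible vertex and halts at once, giving $\parent(T^\star,i+1)=T^\star$.

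It remains to handle the case where $T^\star$ is not a transversal of $\H_{i+1}$. Then some edge $E$ of $\H_{i+1}$ is missed by $T^\star$, i.e.\ $E\cap T^\star=\emptyset$; by the observation above such an $E$ must contain $v_{i+1}$. I claim $T^\star\cup\{v_{i+1}\}\in Tr(\H_{i+1})$. It is a transversal: every old edge is already hit by $T^\star$, and every new edge contains $v_{i+1}$. It is minimal: each $v\in T^\star$ keeps its private edge $E_v\subseteq V_i$, which $v_{i+1}$ cannot hit, and $v_{i+1}$ has the private edge $E$, hit by no vertex of $T^\star$. Finally, to evaluate $\parent(T^\star\cup\{v_{i+1}\},i+1)$, recall $\priv_i(\cdot,v_{i+1})=\emptyset$, while every $v\in T^\star$ still satisfies $\priv_i(T^\star\cup\{v_{i+1}\},v)=\priv_i(T^\star,v)\neq\emptyset$. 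Hence $v_{i+1}$ is the unique vertex eligible for removal; removing it yields $T^\star$, which then has no eligible vertex, so $\parent(T^\star\cup\{v_{i+1}\},i+1)=T^\star$.

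I do not expect a genuine obstacle here, as this is the standard correctness lemma of the ordered-generation framework. The only point demanding care is the bookkeeping between the two notions in play: transversality is measured in $\H_{i+1}$, whereas the $\parent$ operation is defined through the private edges $\priv_i$ computed in $\H_i$. Keeping these straight---and in particular exploiting that $v_{i+1}$ is invisible to $\H_i$---is exactly what reduces both branches to a one-line verification.
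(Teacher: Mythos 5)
Your proposal is correct. The paper does not actually prove this lemma --- it defers to the cited reference, remarking only that the statement ``follows from the definition of the $\parent$ relation'' --- and your case analysis (splitting on whether $T^{\star}$ already hits every edge of $\H_{i+1}$, and using that $v_{i+1}$ lies in no edge of $\H_i$ so that $\priv_i(\cdot,v_{i+1})=\emptyset$ while the sets $\priv_i(\cdot,v)$ for $v\in T^{\star}$ are unaffected) is exactly the standard verification that reference supplies.
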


Finally, the whole framework culminates in the following statement that reduces \transenum{} to the enumeration of the children of a given pair $(T,i)$ with $T\in Tr(\H_i)$, $i<n$, while preserving polynomial delay and space.

\begin{theorem}
    Let \(f, s: \N^{2} \rightarrow \Z_{+}\) be two functions. If there is an algorithm that, given a hypergraph \(\H\) on vertex set \(V = \{v_{1}, \dots, v_{n}\}\) and $m$ edges, an integer \(i \in \intv{n-1}\), and \(T^{\star} \in Tr(\H_{i})\), enumerates \(\children(T^{\star}, i)\) with delay \(f(n, m)\) and using \(s(n, m)\) space, then there is an algorithm that enumerates \(Tr(\H)\) with \(O(n f(n, m))\) delay and \(O(n s(n, m))\) space. 
\end{theorem}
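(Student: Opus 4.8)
The plan is to realize the parent relation as a rooted tree and to enumerate \(Tr(\H)\) by a depth-first traversal of it, emitting a solution exactly when a leaf is reached. Concretely, I take as nodes all pairs \((T,i)\) with \(T\in Tr(\H_i)\) and \(0\le i\le n\), declare \((T,i)\) to be the parent of \((T',i+1)\) whenever \(\parent(T',i+1)=T\), and root the tree at \((\emptyset,0)\), the trivial node of \(Tr(\H_0)\). Since \(\parent(\cdot,i+1)\) is a well-defined function into \(Tr(\H_i)\) by the first of the two preceding lemmas, each non-root node has exactly one parent, and taking parents strictly decreases the level \(i\); hence the structure is genuinely a tree of depth \(n\) in which a node at level \(i\) lies exactly \(i\) steps below the root. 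Its leaves are precisely the pairs \((T,n)\) with \(T\in Tr(\H)=Tr(\H_n)\): every \(T\in Tr(\H)\) sits at such a leaf, and iterating \(\parent\) from it yields the path connecting it to the root, so a DFS from the root visits every solution, while uniqueness of the parent guarantees each is visited exactly once.

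For the algorithm I would traverse this tree recursively: at a node \((T^\star,i)\) with \(i<n\) I invoke the assumed subroutine on input \((\H,i,T^\star)\) to produce \(\children(T^\star,i)\) one child at a time, recursing into each child as soon as it is produced; upon reaching a node \((T,n)\) I output \(T\) and backtrack. Along the current DFS path, whose length is at most \(n\), I keep one \emph{suspended} invocation of the subroutine per node, each using \(s(n,m)\) space, and maintain the current set \(T^\star\) incrementally since it changes by at most one vertex between consecutive levels. This yields the claimed \(O(n\,s(n,m))\) space.

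The delay bound is the crux, since internal nodes emit nothing and a naive traversal could, in principle, spend unbounded time between two outputs were it to descend into subtrees containing no leaf. This is ruled out by \autoref{lem:at-least-one-child}: every node at level \(i<n\) has at least one child, hence a descendant leaf exactly \(n-i\) steps below it. Thus the first leaf is reached from the root in at most \(n\) downward steps, each costing at most \(f(n,m)\) (the delay to the first child of a freshly invoked subroutine), for a first delay of \(O(n\,f(n,m))\). Between two consecutive leaves the traversal backtracks up to some ancestor carrying an unexplored child and then descends to the next leaf: on the upward segment each node only needs its suspended subroutine to confirm exhaustion (cost at most \(f(n,m)\), by the bound on the time after the last output), the branch node delivers its next child (cost at most \(f(n,m)\), the inter-child delay), and the downward segment visits at most \(n\) nodes at cost at most \(f(n,m)\) each. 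As both segments have length at most \(n\), the delay between consecutive outputs, and likewise the delay after the last one, is \(O(n\,f(n,m))\).

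The main obstacle is exactly this delay analysis: correctness and the space bound follow fairly directly from the tree structure and the two lemmas, whereas bounding the delay requires combining the bounded depth \(n\) with the at-least-one-child property to certify that no backtracking or descent phase separating two emitted solutions exceeds \(O(n)\) subroutine steps. I would be careful to keep the child subroutines suspended on the recursion stack rather than restarting them, so that advancing any one of them by a single step costs at most \(f(n,m)\) rather than a full re-enumeration; this is precisely what ties the \(O(n\,f(n,m))\) delay and \(O(n\,s(n,m))\) space bounds together.
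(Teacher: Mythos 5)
Your proof is correct and follows exactly the approach the paper intends for this statement (and defers to~\cite{bartier2024hypergraph} for): a DFS of the parent--child tree rooted at $(\emptyset,0)$ with one suspended child-generator per level of the current path, where \autoref{lem:at-least-one-child} guarantees that every internal node has a descendant leaf at level $n$, so that at most $O(n)$ subroutine steps separate consecutive outputs. One small slip that does not affect the bounds: a child of $(T^\star,i)$ has the form $T^\star\cup X$ for a possibly multi-vertex $X\in Tr(\Delta_{i+1})$ (cf.\ \autoref{lem:extensions}), so the current set does not change ``by at most one vertex'' between consecutive levels, but storing the added set at each level of the stack still keeps the space within $O(n\,s(n,m))$.
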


In the following, given $T^{\star}$ and $i$ as above, let 
\[
    \Delta_{i+1}\coloneq \{E\in \H_{i+1} : E\cap T^{\star}=\emptyset\}\subseteq \inc_{i+1}(v_{i+1})
\]
be the family of hyperedges of $\H_{i+1}$ that are not hit by $T^\star$.
The following shows that children may be found by considering minimal extensions to $\H_{i+1}$ of minimal transversals of $\H_{i}$.

\begin{lemma}\label{lem:extensions}
    If $T\in \children(T^\star,i)$ then $T=T^\star\cup X$ for some $X\in Tr(\Delta_{i+1})$.
\end{lemma}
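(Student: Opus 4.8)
The plan is to characterize the children of $(T^\star, i)$ explicitly and then extract the claimed structure. Recall that $T \in \children(T^\star, i)$ means $T \in Tr(\H_{i+1})$ and $\parent(T, i+1) = T^\star$. By \autoref{lem:at-least-one-child}, the only way to obtain a child is by adding vertices of $V_{i+1}$ to a minimal transversal of $\H_i$; more precisely, since the parent operation only removes vertices, any $T$ with parent $T^\star$ must satisfy $T^\star \subseteq T$, so I would start by writing $T = T^\star \cup X$ for some $X \subseteq V_{i+1} \setminus T^\star$ and argue about what $X$ must look like.

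First I would pin down which new vertices can appear in $X$. Since $T^\star \in Tr(\H_i)$ already hits every edge of $\H_i$, and $\H_{i+1}$ differs from $\H_i$ only by edges incident to $v_{i+1}$ (those edges $E \subseteq V_{i+1}$ with $v_{i+1} \in E$), the vertex $v_{i+1}$ together with these new edges is all that is relevant. The set $\Delta_{i+1}$ collects exactly the edges of $\H_{i+1}$ left unhit by $T^\star$, and as noted in the excerpt, $\Delta_{i+1} \subseteq \inc_{i+1}(v_{i+1})$. Thus $X$ must hit every edge in $\Delta_{i+1}$, i.e., $X$ is a transversal of $\Delta_{i+1}$: indeed $T = T^\star \cup X$ must be a transversal of $\H_{i+1}$, and every edge not already hit by $T^\star$ lies in $\Delta_{i+1}$ and so must be hit by $X$.

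Next I would upgrade ``transversal'' to ``minimal transversal'' of $\Delta_{i+1}$. The key point is minimality of $T$ in $\H_{i+1}$: since $T \in Tr(\H_{i+1})$, every vertex $v \in T$ has a private edge $\priv_{i+1}(T, v) \neq \emptyset$. For a vertex $v \in X$, I would show its private edge (within $\H_{i+1}$) must actually lie in $\Delta_{i+1}$. The reasoning is that any edge hit by $v$ but not by $T^\star$ is unhit by $T^\star$ and hence belongs to $\Delta_{i+1}$; conversely a private edge of $v \in X$ cannot be hit by any other vertex of $T$, in particular not by $T^\star$, so it is unhit by $T^\star$ and therefore in $\Delta_{i+1}$. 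This shows every $v \in X$ has a private edge with respect to $X$ inside $\Delta_{i+1}$, so by the private-edge characterization of minimal transversals stated in the preliminaries, $X \in Tr(\Delta_{i+1})$.

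The main obstacle I expect is the careful bookkeeping around privateness: one must verify that a private edge of $v \in X$ in the full transversal $T = T^\star \cup X$ restricts to a private edge of $v$ in $X$ relative to $\Delta_{i+1}$, which requires checking both that the edge lies in $\Delta_{i+1}$ and that no other vertex of $X$ hits it (the latter being immediate since the edge is private in the larger set $T \supseteq X$). One should also confirm that $X$ contains no ``wasted'' vertices that hit only edges already covered by $T^\star$, as such a vertex would violate minimality of $T$; this is precisely what the private-edge argument rules out. Once these two directions are in place, the decomposition $T = T^\star \cup X$ with $X \in Tr(\Delta_{i+1})$ follows, completing the proof.
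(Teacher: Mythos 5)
Your proposal is correct and follows essentially the same route as the paper: both decompose $T$ as $T^\star\cup X$ with $X=T\setminus T^\star$, observe that the edges left unhit by $T^\star$ are exactly those of $\Delta_{i+1}$ so that $X$ must be a transversal of $\Delta_{i+1}$, and then derive minimality of $X$ from minimality of $T$ (the paper does this directly, you via the private-edge characterization, which amounts to the same thing). No gaps.
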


\begin{proof}
    Let $T\in \children(T^\star,i)$.
    Then $T^\star$ is obtained from $T$ by greedily removing elements that have no private edge in $\H_i$.
    Let $X$ be the set of removed elements.
    Note that since $T^\star$ is transversal of $\H_i$, $X$ may only have private edges among those $E\in \H_{i+1}$ that are not intersected by $T^\star$, i.e., among $\Delta_{i+1}$.
    Then, either $\Delta_{i+1}=\emptyset$ in which case $T=T^\star$ by \autoref{lem:at-least-one-child}, or $X$ is non empty and intersects each edge in $\Delta_{i+1}$.
    By minimality of $T$, we derive that $X$ is minimal with this property, as desired.
\end{proof}

As a corollary, in order to enumerate $\children(T^{\star}, i)$ for some $i\in \intv{n-1}$ and $T^{\star}\in Tr(\H_i)$, one can simply enumerate all the minimal transversals of $\Delta_{i+1}$ and filter out those that do not provide children of $T^\star$.
In the following, we call such sets \emph{minimal extensions} of $(T^{\star}, i)$.
Note, however that the number of minimal transversal of $\Delta_{i+1}$ may be greater (typically exponential) than the number of actual children, hence that this approach fails to efficiently produce children in general.
Moreover, this approach is intractable in general, with instances admitting non-trivial children if and only if NP-complete problems admit a positive answer; see~\cite{bartier2024hypergraph} or \autoref{sec:sequential-limit}.
We nevertheless have the following which says that the approach is still tractable if we can bound the number of minimal extension by a polynomial in the input size, and which is in essence what is done in~\cite{eiter2003new}.

\begin{corollary}\label{cor:all-extensions}
    If there is an algorithm that, given \(T^{\star} \in Tr(\H_{i})\), $i<n$, enumerates all $X\in Tr(\Delta_{i+1})$ in time and space which are polynomial in $n+m$, 
    then there is one that lists \(Tr(\H)\) with polynomial delay and space.
\end{corollary}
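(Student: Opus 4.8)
The plan is to use the hypothesized extension-listing algorithm to build a polynomial-delay, polynomial-space subroutine for \emph{children generation}, and then invoke the reduction theorem stated just above (the one bounding the delay of \transenum{} by $O(n f(n,m))$ once children can be listed with delay $f(n,m)$). Concretely, I would fix $T^\star \in Tr(\H_i)$ and $i < n$, and describe an algorithm listing $\children(T^\star, i)$ as follows: run the assumed algorithm to enumerate $Tr(\Delta_{i+1})$, and for each produced $X$, form the candidate $T \coloneq T^\star \cup X$, test whether it is a genuine child of $(T^\star, i)$, and output it precisely when the test succeeds.

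The key steps are the completeness of this scheme and the tractability of the test. Completeness is exactly \autoref{lem:extensions}: every $T \in \children(T^\star, i)$ is of the form $T^\star \cup X$ with $X \in Tr(\Delta_{i+1})$, so no child is missed. For the test, a candidate $T = T^\star \cup X$ is a child iff $T \in Tr(\H_{i+1})$ and $\parent(T, i+1) = T^\star$; both conditions are checkable in time polynomial in $n+m$, since verifying that $T$ is a minimal transversal of $\H_{i+1}$ amounts to checking that $T$ hits every edge and that each of its vertices retains a private edge, while computing $\parent(T, i+1)$ only requires repeatedly deleting the least-index vertex with empty private-edge set in $\H_i$. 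I would also record that each child is generated exactly once: if some $X \in Tr(\Delta_{i+1})$ contained a vertex of $T^\star$, that vertex would hit no edge of $\Delta_{i+1}$ (edges of $\Delta_{i+1}$ are by definition disjoint from $T^\star$), contradicting minimality of $X$; hence $X \cap T^\star = \emptyset$ and $X = T \setminus T^\star$ is determined by the child $T$, so distinct extensions yield distinct children and no duplicates arise.

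It remains to account for delay and space, which is where the hypothesis does the real work. Since the assumed algorithm lists all of $Tr(\Delta_{i+1})$ in \emph{total} time polynomial in $n+m$, there are only polynomially many extensions, and each is post-processed by the polynomial-time test above; consequently the whole children routine runs within total time, hence delay, $f(n,m)$ and space $s(n,m)$ both polynomial in $n+m$. Feeding this into the reduction theorem yields an algorithm enumerating $Tr(\H)$ with delay $O(n f(n,m))$ and space $O(n s(n,m))$, both polynomial in $n+m$, as claimed. The only genuinely delicate point, rather than any deep difficulty, is the correctness of the filtering step: \autoref{lem:extensions} supplies only the inclusion $\children(T^\star,i) \subseteq \{T^\star \cup X : X \in Tr(\Delta_{i+1})\}$, so one must verify that the parent test discards exactly the spurious candidates while keeping each true child once, which the uniqueness observation above guarantees.
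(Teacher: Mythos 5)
Your proposal is correct and follows exactly the route the paper intends: enumerate $Tr(\Delta_{i+1})$ via the hypothesized algorithm, filter candidates $T^\star\cup X$ by the polynomial-time child test, and feed the resulting children-generation routine into the framework theorem; the paper leaves this as an immediate consequence of \autoref{lem:extensions} and the surrounding discussion. Your added observations (that $X\cap T^\star=\emptyset$ so no duplicates arise, and that the parent test is polynomial) are correct fillings-in of details the paper omits.
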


Note that, if $\H$ is $d$-degenerate and $v_1,\dots,v_n$ is an optimal degeneracy ordering, then minimal extensions have size at most $d$.
In~\cite{eiter2003new}, the authors use this fact to provide a polynomial delay algorithm that lists minimal transversals in $O(1)$-degenerate hypergraphs relying on \autoref{cor:all-extensions}.
See~\cite{eiter2003new,bartier2024hypergraph} for the definition of these notions.
In this paper, we combine this corollary together with the elimination ordering of chordal bipartite graphs provided by~\autoref{prop:weak-seo} to show that \domenum{} and \tdomenum{} can be solved with polynomial delay as well in this class. 

\section{Minimal extension for (total) dominating sets}\label{sec:mds-and-mtds}

In this section, we apply the sequential method to \transenum{} restricted to hypergraphs of closed neighborhoods of chordal bipartite graphs to show that \autoref{qu:mds} admits a positive answer.
More specifically, we use the weak-simplicial elimination ordering of chordal bipartite graphs to restrict the instance of minimal extensions enumeration to a variant of domination in bipartite chain graphs, which is tractable.
By \autoref{cor:all-extensions}, this allows us to obtain \autoref{thm:mds-ch-bip}.

In the following, let $G$ be a chordal bipartite graph on $n$ vertices and $v_1,\dots,v_n$  
be a weak-simplicial ordering of its vertices as provided by \autoref{prop:weak-seo}.
Let $\H$ be its hypergraph of closed neighborhoods, $ i \in \intv{n-1}$ be an integer and $T$ be a minimal transversal of $\H_i$ as in the statement of \autoref{cor:all-extensions}.
We will show how to enumerate the minimal transversals of $\Delta_{i+1}$ in polynomial time and space, to derive the desired result. 

First note that since the edges of $\Delta_{i+1}$ belong to $\H_{i+1}$ and are incident to $v_{i+1}$, they may be equal to $N[v_{i+1}]$ if $N[v_{i+1}]\subseteq V_{i+1}$, or to $N[u]$ for some neighbor $u$ of $v_{i+1}$ if $N[u]\subseteq V_{i+1}$.
Note that in the later case, the set $N[u]$ may contain vertices at distance 2 from $v_{i+1}$.
From now on, let $B$ be the set of vertices $u\in N(v_{i+1})$ such that $N[u]\in \Delta_{i+1}$, and let
\[
    R\coloneq \left(\,\bigcup\{E: E \in \Delta_{i+1}\}\right) \setminus B
\]
be the remaining vertices that lie in some edge of $\Delta_{i+1}$.
Consequently $B\subseteq N(v_{i+1})$ and $R\subseteq N(v_{i+1})\cup N^2(v_{i+1})$.
Thus the graph $H=G_i[R\cup B]$ is an induced subgraph of $G_i[N(v_{i+1})\cup N^2(v_{i+1})]$ which, by \autoref{prop:weak-seo}, is a bipartite chain.
Hence $H$ is a bipartite chain.
We denote by $(X,Y)$ the bipartition of $H$ where $B\subseteq X\subseteq N(v_{i+1})$, and by $x_1,\dots,x_p$ and $y_1,\dots,y_q$ the respective orderings of $X$ and $Y$ satisfying $N(x_j)\cap Y\subseteq N(x_\ell)\cap Y$ and $N(y_\ell)\cap X\subseteq N(y_j)\cap X$ for all $j<\ell$; see~\autoref{sec:prelim} and \autoref{fig:delta_i} for an illustration of the situation.

\begin{figure}
    \centering
    \includegraphics[scale=\figscale]{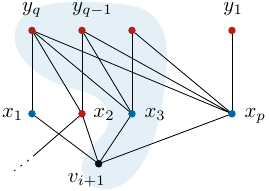}
    \caption{The bipartite chain instance induced by $X$ and $Y$ with $v_{i+1}$ depicted as well. Blue vertices are those whose closed neighborhoods are incident to $v_{i+1}$ \emph{and} included in $V_{i+1}$ as a subset. Red vertices in $X$ have neighbors out of $V_i$.}
    \label{fig:delta_i}
\end{figure}

In what remains we shall call \emph{blue} the vertices in $B$, \emph{red} those in $R$, and characterize minimal transversals of $\Delta_{i+1}$ as particular sets analogous to red-blue dominating sets where red and blue vertices may be selected in order to dominate blue vertices.

\begin{proposition}\label{prop:red-part}
    Let $Z\in Tr(\Delta_{i+1})$. 
    Then 
    \begin{itemize}
        \item $|Z \cap R \cap N(v_{i+1})|\leq 1$; and
        \item $|Z\cap N^2(v_{i+1})|\leq 1$.
    \end{itemize}
\end{proposition}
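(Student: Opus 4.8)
The plan is to analyze, for each vertex that could belong to $Z$, exactly which edges of $\Delta_{i+1}$ it hits, and then invoke the standard characterization recalled in \autoref{sec:prelim}: since $Z\in Tr(\Delta_{i+1})$ is minimal, every $v\in Z$ must own a private edge in $\Delta_{i+1}$, i.e.\ an edge hit by $v$ and by no other vertex of $Z$. Both bullets will then follow by exhibiting two vertices that cannot simultaneously own private edges. First I would record how the relevant sets sit inside the bipartite chain $H$: since $B\subseteq X\subseteq N(v_{i+1})$ lies opposite to $v_{i+1}$ while $Y$ lies on the same side as $v_{i+1}$, one gets the identities $R\cap N(v_{i+1})=X\setminus B$ and $N^2(v_{i+1})\cap(R\cup B)=Y$; in particular $Z\cap R\cap N(v_{i+1})\subseteq X\setminus B$ and $Z\cap N^2(v_{i+1})\subseteq Y$, and the universal vertex $v_{i+1}$ lies in neither set, so it never interferes.

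For the first bullet I would use that every edge of $\Delta_{i+1}$ is either $N[v_{i+1}]$ or $N[u]$ for some blue $u\in B$. A red neighbour $x\in X\setminus B$ lies in $N[v_{i+1}]$, but it cannot lie in any $N[u]$ with $u\in B$: this would force $x=u$ (excluded since $x\notin B$) or $x$ adjacent to $u$, which is impossible because $N(v_{i+1})$ is independent ($v_{i+1}$ being weak-simplicial) and $x,u$ both lie in it. Hence every vertex of $X\setminus B$ hits exactly the single edge $N[v_{i+1}]$. If $Z$ contained two such vertices, neither could have a private edge, contradicting minimality; so $|Z\cap R\cap N(v_{i+1})|\leq 1$.

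The second bullet is the more substantial point, and is where the chain structure enters. Each $y\in Y\subseteq N^2(v_{i+1})$ avoids $N[v_{i+1}]$ and hits precisely the edges $N[u]$ with $u\in B$ and $u\sim y$, i.e.\ the edges indexed by $B\cap N(y)$. Since $H$ is a bipartite chain, the sets $N(y)\cap X$ are linearly ordered by inclusion along the ordering $y_1,\dots,y_q$, namely $N(y_\ell)\cap X\subseteq N(y_j)\cap X$ for $j<\ell$; intersecting with $B$ shows the hitting sets $\{N[u]:u\in B\cap N(y_\ell)\}$ are nested the same way. Consequently, if $y_j,y_\ell\in Z$ with $j<\ell$, then every edge hit by $y_\ell$ is also hit by $y_j$, so $y_\ell$ owns no private edge, again contradicting minimality; thus $|Z\cap N^2(v_{i+1})|\leq 1$. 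The only care needed here is to verify the direction of the nesting and that $Y$ is correctly identified as the distance-$2$ part of the chain; once the hitting sets are seen to form a chain, both bounds reduce to the very same private-edge argument.
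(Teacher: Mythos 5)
Your proof is correct and follows essentially the same route as the paper's: both bullets are reduced to the private-edge characterization of minimal transversals, using for the first that a vertex of $R\cap N(v_{i+1})$ hits only $N[v_{i+1}]$ (since $N(v_{i+1})$ is independent), and for the second that the chain ordering on $Y$ nests the sets of edges hit by distinct vertices of $N^2(v_{i+1})$, so the one with the smaller hitting set has no private edge. Your write-up is slightly more explicit than the paper's about identifying $R\cap N(v_{i+1})$ with $X\setminus B$ and $Z\cap N^2(v_{i+1})$ with a subset of $Y$, but the argument is the same.
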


\begin{proof}
    First, note that if $x$ belongs to $Z \cap R \cap N(v_{i+1})$ then it is incident to at most one edge in $\Delta_{i+1}$ (as $H$ is bipartite), namely $N[v_{i+1}]$, in which case $N[v_{i+1}]$ is a private edge of $x$.
    Hence, no two distinct vertices $x,y$ may lie in $Z \cap R \cap N(v_{i+1})$ by minimality of $Z$, yielding the first inequality.

    Now, recall that $N^2(v_{i+1})\subseteq Y$ hence that one of $N(y)\cap X\subseteq N(y')\cap X$ or $N(y)\cap X\supseteq N(y')\cap X$ holds for any pair $y,y'\in N^2(v_{i+1})$.
    Since edges of $\Delta_{i+1}$ are of the form $N[x]$ for some $x\in X$, no two distinct $y,y'$ can have private edges in $\Delta_{i+1}$ simultaneously.
    Hence, $|Z\cap N^2(v_{i+1})|\leq 1$ as claimed.
\end{proof}

\begin{lemma}\label{lem:extension-characterization}
    Let $Z\in Tr(\Delta_{i+1})$. 
    Then exactly one of the following holds:
    \begin{enumerate}
        \item\label{it:vi} \(Z = \{v_{i+1}\}\); 
        \item\label{it:blue} \(Z \subseteq B\) in which case $Z=B$; 
        \item\label{it:red} \(Z \subseteq R\) in which case \(|Z| \leq 2\); or
        \item\label{it:blue-red} \(Z = \{r\} \cup (B \setminus N(r))\) for some \(r \in N^{2}(v_{i+1})\).
    \end{enumerate}

    
    \begin{proof}
        First, note that all the cases are mutually distinct.
        Hence, we may consider some $Z\in Tr(\Delta_{i+1})$, assume that $Z\neq \{v_{i+1}\}$, and prove the conclusion for each of 
        \hyperref[it:blue]{Items \ref*{it:blue}}--\ref{it:blue-red} 
        Note that each $x\in X$ hits a single edge $N[x]$ in $\Delta_{i+1}$, as $X$ is an independent set.
        Hence, if $Z\subseteq B$ then $Z=B$ concluding \hyperref[it:blue]{Item \ref*{it:blue}}.
        If $Z\subseteq R$ then we have that $|Z| \leq 2$ by \autoref{prop:red-part} deriving \hyperref[it:red]{Item \ref*{it:red}}.
        We are thus left with proving \hyperref[it:blue]{Item \ref*{it:blue-red}}.
        Suppose that \( Z \cap B \neq \emptyset\) and \(Z \cap R \neq \emptyset\).  
        Note that if $x\in X\cap R$, then it may be incident to at most one edge in $\Delta_{i+1}$, namely $N[v_{i+1}]$.
        However, since \(Z \cap B \neq \emptyset\) by assumption of this last case, and since each element in $B$ is adjacent to $v_{i+1}$, the edge $N[v_{i+1}]$ may not be the private edge of such an $x$ in $Z$.
        Hence $Z\cap X\cap R=\emptyset$, that is, $Z\cap R\subseteq Y$.
        By \autoref{prop:red-part} there is exactly one vertex, call it $y$, belonging to $Z\cap Y$.
        Finally, note that in addition to $N[v_{i+1}]$, each \(b \in B\) hits a single edge \(N[b] \in \Delta_{i+1}\) that only $b$ can hit in $X$.
        Hence \(B \setminus N(r) \subseteq Z\) and \hyperref[it:blue-red]{Item \ref*{it:blue-red}} follows.
    \end{proof}
\end{lemma}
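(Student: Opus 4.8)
The plan is to sort a minimal transversal $Z\in Tr(\Delta_{i+1})$ according to how it meets the three groups $\{v_{i+1}\}$, $B$, and $R$, and to read each of the four claimed shapes off the structure of $\Delta_{i+1}$. The starting observation is that every edge of $\Delta_{i+1}$ is a closed neighborhood containing $v_{i+1}$, hence is either $N[v_{i+1}]$ or $N[b]$ for some $b\in B$; in particular $v_{i+1}$ hits every edge. First I would dispose of the trivial case: if $v_{i+1}\in Z$ then minimality immediately forces $Z=\{v_{i+1}\}$, which is \hyperref[it:vi]{Item~\ref*{it:vi}}. From here on I assume $v_{i+1}\notin Z$, so that $Z\subseteq B\cup R$.

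The two ``pure'' cases are then short. If $Z\subseteq B$, then, as $B$ is an independent set contained in $N(v_{i+1})$, the only vertex of $Z$ meeting $N[b]$ is $b$ itself, so every $b\in B$ must be kept to hit $N[b]$; hence $Z=B$, giving \hyperref[it:blue]{Item~\ref*{it:blue}}. If $Z\subseteq R$, then the two bounds of \autoref{prop:red-part}---at most one vertex in $R\cap N(v_{i+1})$ and at most one in $N^2(v_{i+1})$---combine to give $|Z|\le 2$, which is \hyperref[it:red]{Item~\ref*{it:red}}. As these three situations are pairwise incompatible, the only remaining possibility is the mixed one, $Z\cap B\neq\emptyset$ and $Z\cap R\neq\emptyset$, which must produce \hyperref[it:blue-red]{Item~\ref*{it:blue-red}}.

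The mixed case carries the real content, and I would split it into three steps. First, rule out red vertices of $X$: a vertex $x\in R\cap X$ hits, among the edges of $\Delta_{i+1}$, only $N[v_{i+1}]$ (it is nonadjacent to every other vertex of the independent set $X$), but any blue vertex of $Z$ also hits $N[v_{i+1}]$, so $x$ would have no private edge---contradicting minimality. Thus $Z\cap R\subseteq Y=N^2(v_{i+1})$. Second, invoke the second bound of \autoref{prop:red-part} to pin down a unique $r\in Z\cap Y$. Third, for each $b\in B\setminus N(r)$ the edge $N[b]$ is reached in $Z$ by $b$ alone (not by $r$, since $b\notin N(r)$, and not by any other surviving vertex), so $B\setminus N(r)\subseteq Z$; this yields the inclusion $\{r\}\cup(B\setminus N(r))\subseteq Z$.

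The main obstacle is the converse inclusion, namely that no blue neighbor of $r$ survives in $Z$. A putative vertex $b_0\in Z\cap B\cap N(r)$ has its edge $N[b_0]$ already hit by $r$, so its only candidate private edge is $N[v_{i+1}]$; I would rule this out by noting that the vertices of $B\setminus N(r)$ also hit $N[v_{i+1}]$, leaving $b_0$ with no private edge. The delicate point---and the step I expect to be the crux---is precisely the sub-case in which $N[v_{i+1}]\in\Delta_{i+1}$ and $B\subseteq N(r)$: then $B\setminus N(r)=\emptyset$ and the previous contradiction evaporates, so one must argue separately, using how $T^\star\in Tr(\H_i)$ interacts with the weak-simplicial structure around $v_{i+1}$, either that this configuration does not occur or that the transversals it produces are accounted for. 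Settling this sub-case is where I would concentrate the effort.
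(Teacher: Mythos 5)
You follow the paper's route essentially verbatim up to the last step: \hyperref[it:vi]{Items~\ref*{it:vi}}--\ref*{it:red} are handled the same way, and in the mixed case you reproduce the paper's three moves (ruling out $Z\cap R\cap X$ via privacy of $N[v_{i+1}]$, extracting a unique $r\in Z\cap Y$ from \autoref{prop:red-part}, and deducing $B\setminus N(r)\subseteq Z$). The one place you stop is the reverse inclusion $Z\cap B\cap N(r)=\emptyset$, and the sub-case you isolate --- $N[v_{i+1}]\in\Delta_{i+1}$ together with $B\subseteq N(r)$ --- is left open, so as written your proof is incomplete.

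However, your suspicion about that sub-case is exactly right, and you should not expect to close it as the statement stands: the configuration does occur, and it breaks the lemma. Let $G$ be the $4$-cycle $v_{i+1},b_1,r,b_2$ with a pendant vertex $t$ attached to $r$, take the weak-simplicial ordering $b_1,b_2,t,r,v_{i+1}$, and $T^\star=\{t\}\in Tr(\H_4)$. Then $\Delta_5=\{N[b_1],N[b_2],N[v_{i+1}]\}$, $B=\{b_1,b_2\}$, $Y=\{r\}$ with $B\subseteq N(r)$, and $\{r,b_1\}$ is a minimal transversal of $\Delta_5$: the edge $N[b_2]$ is private to $r$, and $N[v_{i+1}]$ is private to $b_1$, since $r\in N^2(v_{i+1})$ does not hit it and no vertex of $B\setminus N(r)=\emptyset$ is present to hit it either. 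This set matches none of the four items, so the paper's closing sentence that \hyperref[it:blue-red]{Item~\ref*{it:blue-red}} ``follows'' from $B\setminus N(r)\subseteq Z$ is too quick; a fifth case is needed, namely $Z=\{r,b\}$ for some $b\in B$, arising only when $N[v_{i+1}]\in\Delta_{i+1}$, $B\subseteq N(r)$ and $|B|\geq 2$. Your own observation gives the clean dichotomy with which to finish: if some $b\in Z\cap B\cap N(r)$ survives, its only candidate private edge is $N[v_{i+1}]$, which forces $Z\cap B=\{b\}$ and hence $B\setminus N(r)=\emptyset$; otherwise \hyperref[it:blue-red]{Item~\ref*{it:blue-red}} holds as stated. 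Either way the bound $|Tr(\Delta_{i+1})|=O(n^2)$ is preserved, so \autoref{cor:mds-extensions} and the downstream theorems are unharmed.
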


Note that the conditions of the items in \autoref{lem:extension-characterization} can be checked in polynomial time, and provide an upper bound of $O(n^2)$ on the number of minimal transversals of $\Delta_{i+1}$.
Clearly these minimal transversals can be constructed within the same time and using polynomial space.
We formalize this as follows.

\begin{corollary}\label{cor:mds-extensions}
    The set $Tr(\Delta_{i+1})$ can be listed in polynomial time and space.
\end{corollary}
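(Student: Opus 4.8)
The plan is to lean entirely on the complete characterization of $Tr(\Delta_{i+1})$ supplied by \autoref{lem:extension-characterization}: every minimal transversal $Z$ falls into exactly one of its four types, so it suffices to enumerate, for each type, a polynomial-size family of \emph{candidate} sets that is guaranteed to contain all minimal transversals of that type, and then to filter out the candidates that fail to be minimal transversals. Since the four types are mutually exclusive, this produces each element of $Tr(\Delta_{i+1})$ exactly once (up to a harmless final deduplication). The case $\Delta_{i+1}=\emptyset$ is handled separately by simply outputting the empty transversal.

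First I would bound the number of candidates per type. \hyperref[it:vi]{Item \ref*{it:vi}} and \hyperref[it:blue]{Item \ref*{it:blue}} each contribute a single candidate, namely $\{v_{i+1}\}$ and $B$. For \hyperref[it:red]{Item \ref*{it:red}}, every candidate is a subset of $R$ of size at most $2$; since $|R|\leq n$ there are only $O(n^2)$ such subsets, which I would enumerate directly. For \hyperref[it:blue-red]{Item \ref*{it:blue-red}}, each candidate is determined by a single vertex $r\in N^2(v_{i+1})$ through $Z=\{r\}\cup(B\setminus N(r))$, giving at most $n$ candidates; moreover, by \autoref{prop:red-part} this vertex $r$ is the unique element of $Z$ lying in $N^2(v_{i+1})$, so distinct choices of $r$ yield distinct sets and no repetition arises within this type. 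Summing over the four types gives $O(n^2)$ candidate sets in total, matching the announced bound.

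Next I would verify each candidate. Using the standard criterion recalled in \autoref{sec:prelim}, a set is a minimal transversal of $\Delta_{i+1}$ if and only if it is a transversal and every one of its elements has a private edge; both tests run in time polynomial in $n$ once $\Delta_{i+1}$ is known, and each candidate has size at most $|B|+1=O(n)$, so it can be stored in polynomial space. Discarding the candidates that fail the test leaves exactly $Tr(\Delta_{i+1})$. Altogether the procedure examines $O(n^2)$ candidates, spending polynomial time and space on each, which yields the claimed bounds.

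I do not expect a genuine obstacle here, since the heavy lifting is already done by \autoref{lem:extension-characterization}; the one point requiring care is that the characterization gives only \emph{necessary} conditions, so a candidate of a given type need not be a minimal transversal and the filtering step is essential — in particular not every size-$2$ subset of $R$ survives. Keeping the per-candidate verification genuinely polynomial, and confirming that the injectivity observation for \hyperref[it:blue-red]{Item \ref*{it:blue-red}} prevents repetitions, are the two places where I would double-check the details.
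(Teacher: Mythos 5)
Your proposal is correct and follows essentially the same route as the paper: the paper also derives the $O(n^2)$ bound on $|Tr(\Delta_{i+1})|$ from the four-case characterization of \autoref{lem:extension-characterization} and notes that the corresponding candidates can be constructed and checked in polynomial time and space. Your write-up merely makes explicit the candidate enumeration per case and the filtering via the transversal-plus-private-edge criterion, which the paper leaves implicit.
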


We conclude the following by combining 
\hyperref[cor:all-extensions]{Corollaries \ref*{cor:all-extensions}} and~\ref{cor:mds-extensions}.
This answers positively \autoref{qu:mds}.

\begin{theorem}\label{thm:mds-ch-bip}
    There is a polynomial delay and space algorithm enumerating minimal dominating sets in chordal bipartite graphs.
\end{theorem}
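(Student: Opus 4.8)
The plan is to recognize that the statement is the terminal link in the chain of reductions developed in this section, so the proof amounts to assembling the pieces in the right order rather than introducing new machinery. First I would invoke \autoref{prop:mds-closed-neighborhoods} to recast \domenum{} as \transenum{} on the hypergraph $\H=\{N[v] : v\in V(G)\}$ of closed neighborhoods; this hypergraph has $n$ edges on $n$ vertices and is computable in polynomial time, so any bound obtained in terms of $n+m$ translates to a bound in terms of $|V(G)|$ alone.

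Next I would fix, via \autoref{prop:weak-seo}, a weak-simplicial elimination ordering $v_1,\dots,v_n$ of $G$, which exists precisely because $G$ is chordal bipartite and can be computed in polynomial time. This is the specific ordering of $V(\H)$ with respect to which the sequential method decomposes the instance; its role is to guarantee that each $G_i[N(v_{i+1})\cup N^2(v_{i+1})]$ is a bipartite chain, the structural fact that later controls the number of minimal extensions.

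With these two ingredients in place, the heart of the argument is already packaged. \autoref{cor:all-extensions} reduces the whole enumeration to listing, for each $i<n$ and each $T^{\star}\in Tr(\H_i)$, the set $Tr(\Delta_{i+1})$ in time and space polynomial in $n+m$, and \autoref{cor:mds-extensions} supplies exactly such a routine. I would therefore simply combine the two corollaries: feeding the extension-listing procedure of \autoref{cor:mds-extensions} as the subroutine required by \autoref{cor:all-extensions} yields an algorithm enumerating $Tr(\H)=\mds(G)$ with polynomial delay and polynomial space, which is the claim.

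The step I expect to carry the real weight — and the one that genuinely uses chordal bipartiteness — is the polynomial bound underlying \autoref{cor:mds-extensions}, namely the characterization of $Tr(\Delta_{i+1})$ in \autoref{lem:extension-characterization} together with the counting estimates of \autoref{prop:red-part}. The inherent danger in the sequential method is that $|Tr(\Delta_{i+1})|$ can be exponential even when the number of genuine children is small, which would break the reduction in \autoref{cor:all-extensions}. What rescues us is that the bipartite chain structure of $G_i[N(v_{i+1})\cup N^2(v_{i+1})]$ forces every minimal transversal of $\Delta_{i+1}$ into one of four rigidly constrained shapes, bounding their number by $O(n^2)$; once this is established, the rest is bookkeeping. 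I would stress that this is exactly the point at which the argument cannot be transplanted to general bipartite graphs, consistent with the hardness of children generation announced later in the paper.
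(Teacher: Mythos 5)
Your proposal is correct and matches the paper's own proof, which is precisely the combination of \autoref{cor:all-extensions} with \autoref{cor:mds-extensions} (the latter resting on \autoref{lem:extension-characterization} and the bipartite chain structure guaranteed by the weak-simplicial ordering of \autoref{prop:weak-seo}). You have also correctly located where chordal bipartiteness does the real work, namely the $O(n^2)$ bound on $|Tr(\Delta_{i+1})|$.
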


Let us now argue that with small modifications we can get a polynomial-delay algorithm for \tdomenum{} on the same class.
Note that the minimal extension problem is almost identical, with the caveat that the hyperedges we must hit are now of the form $N(x)$ for $x=v_{i+1}$ or $x\in X$, i.e., we are now dealing with open instead of closed neighborhoods.
A similar analysis as conducted in the proof of \autoref{lem:extension-characterization} shows that minimal extensions are red subsets dominating blue elements plus $v_{i+1}$ if $N(v_{i+1})\subseteq V_{i+1}$.
We note that this becomes an instance of proper red-blue domination (where only red elements may be picked) which is precisely what is solved in \cite{golovach2016enumerating} for chordal bipartite graphs.
However in our case, due to the instance being a bipartite chain graph, the problem is trivial.
Indeed, it may be seen that solutions have size at most two, intersecting $X$ and $Y$ on at most one vertex.
As such, we have an alternative proof of \autoref{thm:mtds-ch-bip}, which was already known by \cite{golovach2016enumerating}.

\thmtds*

\section{Minimal connected dominating sets}\label{sec:mcds}

Recall that \cdomenum{} amounts to list the minimal transversals of the (implicit) hypergraph of minimal separators of a given graph $G$, i.e., \autoref{prop:mcds-separators}.
By \autoref{prop:chbip-sep-enum}, this hypergraph can be constructed in polynomial time if $G$ is chordal bipartite.
In this section, we moreover show that this hypergraph has bounded conformality in that case. 
As a corollary, we can use the algorithm of Khachiyan et al.~in~\cite{khachiyan2007conformality} to list minimal connected dominating sets in incremental-polynomial time in chordal bipartite graphs, providing a positive answer to \autoref{qu:mcds}.

Let us define the notion of conformality introduced by Berge in~\cite{berge1984hypergraphs}. 
We say that a hypergraph $\H$ is of \emph{conformality} $c$ if the following property holds for every subset $X\subseteq V(\H)$: $X$ is contained (as a subset) in a hyperedge of $\H$ whenever each subset of $X$ of cardinality at most $c$ is contained (as a subset) in a hyperedge of $\H$. 
See also~\cite{berge1984hypergraphs,khachiyan2007conformality} for other equivalent characterizations of bounded conformality.
We recall that a hypergraph $\H$ is called Sperner if $E_1\not\subseteq E_2$ for any two distinct hyperedges $E_1,E_2$ in $\H$.

Khachiyan, Boros, Elbassioni, and Gurvich proved the following.

\begin{theorem}[\cite{khachiyan2007conformality}]\label{theorem:conformality}
    The minimal transversals of a Sperner hypergraph of bounded conformality can be enumerated in incre\-mental-poly\-nomial time.
\end{theorem}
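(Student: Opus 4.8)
The plan is to reduce the enumeration task to a polynomially solvable decision problem and then to exploit the conformality bound to solve that decision problem. First I would invoke the standard equivalence (in the spirit of~\cite{fredman1996complexity}) between enumerating $Tr(\H)$ in incremental-polynomial time and solving the following \textsc{Dual} problem in time polynomial in $|V(\H)| + |\H| + |\G|$: given the Sperner hypergraph $\H$ together with a subfamily $\G \subseteq Tr(\H)$ already produced, either certify that $\G = Tr(\H)$, or output some $T \in Tr(\H) \setminus \G$. Running such a routine repeatedly, each successful call produces a fresh minimal transversal while costing time polynomial in the input plus the number of transversals found so far, which is exactly incremental-polynomial enumeration. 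The combinatorial handle is the classical completeness characterization: $\G \subsetneq Tr(\H)$ if and only if there is a set $X \subseteq V(\H)$ that contains no hyperedge of $\H$ and whose complement $V(\H)\setminus X$ contains no member of $\G$. From any such $X$ one extracts a minimal transversal $T \subseteq V(\H)\setminus X$ of $\H$, which is necessarily outside $\G$, so the whole problem reduces to searching for a violating $X$ or proving none exists.

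The conformality hypothesis is what makes this search tractable. Rephrasing the definition, $\H$ has conformality at most $c$ precisely when every inclusion-minimal set that is \emph{not} contained in any hyperedge of $\H$ has cardinality at most $c$; equivalently, the forbidden configurations that prevent a set from lying inside an edge are all of bounded size. I would use this to drive a recursive decomposition of \textsc{Dual} analogous to the Fredman--Khachiyan recursion: at each node one either settles small instances directly, or selects a vertex occurring in a large fraction of the remaining edges (such a vertex exists by an averaging argument) and branches on whether it belongs to the sought transversal, recursing on the two induced subinstances obtained by deleting or contracting that vertex. Bounded conformality guarantees that the edge-free sets one must assemble are governed by their bounded-size minimal obstructions, so at each recursion node only polynomially many bounded-size certificates need be examined, rather than the super-polynomially many partial configurations that arise in the unrestricted case; this is where the structural theory of conformal hypergraphs from Berge~\cite{berge1984hypergraphs} enters.

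The main obstacle, and the crux of the argument, will be showing that this recursion closes in polynomially many steps per output rather than merely quasi-polynomially. Here the role of the bound $c$ is to cap the depth and fan-out of the search: because the minimal non-faces have size at most $c$, a violating set $X$, if it exists, can always be completed from a partial assignment using a bounded-size seed, which lets one prune branches that cannot be extended and bound the number of surviving subproblems by a polynomial in $|\H| + |\G|$ with exponent depending on $c$. Establishing this polynomial bound---that is, that bounded conformality upgrades the generic quasi-polynomial recursion into a genuinely incremental-polynomial procedure---is the technically delicate step, and it is exactly where the conformality structure must be combined with the self-reducibility of the transversal problem.
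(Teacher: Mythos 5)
First, a point of reference: the paper does not prove this statement at all --- it is imported verbatim from~\cite{khachiyan2007conformality} and used as a black box (to derive \autoref{thm:mcds-incp} from \autoref{lemma:conf}). So the comparison here is really between your sketch and the argument in the cited source, not against anything in this manuscript.

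Your setup is sound and standard: the reduction of incremental-polynomial enumeration to the decision problem \textsc{Dual} (given $\H$ and $\G\subseteq Tr(\H)$, certify completeness or exhibit a new minimal transversal), the witness criterion that $\G\subsetneq Tr(\H)$ iff some $X$ contains no edge of $\H$ while $V(\H)\setminus X$ contains no member of $\G$, and the reformulation of conformality $c$ as ``every inclusion-minimal set contained in no edge of $\H$ has size at most $c$'' are all correct. The genuine gap is that the entire content of the theorem --- why \textsc{Dual} becomes polynomial-time solvable under bounded conformality --- is never supplied. You propose to graft the conformality bound onto the Fredman--Khachiyan recursion and then explicitly label the key claim (``the recursion closes in polynomially many steps per output'') as the technically delicate step you have not established. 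The one mechanism you offer, that a violating set $X$ ``can always be completed from a partial assignment using a bounded-size seed,'' does not follow from what precedes it: conformality bounds the minimal sets \emph{not} contained in any edge of $\H$, whereas a \textsc{Dual} witness $X$ must simultaneously contain no edge of $\H$ and have $V(\H)\setminus X$ contain no member of $\G$; such an $X$ can be large, the constraint involving $\G$ is untouched by the conformality of $\H$, and nothing in the sketch bounds the fan-out or depth of the branching tree by a polynomial. Note also that the known proof in~\cite{khachiyan2007conformality} does not proceed by sharpening the Fredman--Khachiyan recursion; it exploits the bounded size of the minimal non-subedge sets directly to restrict where a new minimal transversal can hide relative to the already-generated family $\G$. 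As written, the proposal is a correct framing followed by an acknowledged hole exactly where the theorem's difficulty lies, so it cannot be accepted as a proof.
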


\iflongelse{
We show that the family of minimal separators has bounded conformality in chordal bipartite graphs.}{Using the properties of minimal separators of chordal bipartite graphs given in \autoref{sec:prelim}, we obtain the following, whose proof is moved to appendix, and derive the desired algorithm as a corollary of \autoref{theorem:conformality} and \autoref{lemma:conf}.}

\begin{lemma}\label{lemma:conf}
    If $G$ is chordal bipartite then $\S(G)$ has conformality $5$.
\end{lemma}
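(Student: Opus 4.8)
The plan is to prove that $\S(G)$ has conformality $5$ by a contrapositive argument. I want to show that for any $X \subseteq V(G)$, if every subset of $X$ of size at most $5$ is contained in some minimal separator, then $X$ itself is contained in a minimal separator. Equivalently, I will assume that $X$ is \emph{not} contained in any minimal separator and exhibit a subset of $X$ of size at most $5$ that is also not contained in any minimal separator. The bipartition $(A,B)$ of $G$ will be the organizing tool: writing $X_A = X \cap A$ and $X_B = X \cap B$, the first step is to understand what $X$ looks like when it \emph{is} contained in a minimal separator $S$. By \autoref{prop:chbip-vertex-sep}, any minimal separator induces a complete bipartite graph, so $S = (S \cap A) \cup (S \cap B)$ with every vertex of $S \cap A$ adjacent to every vertex of $S \cap B$. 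Thus a necessary local condition for $X \subseteq S$ is that $X_A$ and $X_B$ are completely joined, i.e., every vertex of $X_A$ is adjacent to every vertex of $X_B$; this biclique condition is witnessed by pairs, hence at the level of conformality constant $2$.

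The biclique condition alone is not sufficient, however: $X$ could be completely joined yet fail to sit inside any minimal separator, for reasons of connectivity and closeness. The heart of the argument is to reconstruct a candidate minimal separator from $X$ and argue that if $X$ is completely joined (a size-$2$ certificate) but still not contained in any minimal separator, then a bounded witness of failure must exist. Here I would lean on \autoref{prop:close} and \autoref{prop:close-neighbor}: a minimal separator $S$ has two close components $C_1, C_2$ in $G - S$, and \autoref{prop:close-neighbor} tells us that within a close component there is a single vertex whose neighborhood meets $S$ exactly in $S \cap A$ (and symmetrically for $B$). This suggests that membership of $X$ in a minimal separator can be decided by checking, for the biclique $X_A \cup X_B$, whether one can find vertices on ``two sides'' that are each dominated appropriately by $X_A$ and $X_B$ and whose removal disconnects them minimally. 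The plan is to show these obstructions are detectable by looking at a bounded number of vertices of $X$ at a time.

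Concretely, I expect the constant $5$ to arise from combining a small number of independent local certificates: roughly, a pair from $X_A \cup X_B$ to witness a non-edge breaking the biclique property (contributing $2$), together with a small constant number of additional vertices needed to certify that no close-component structure can be built around the candidate separator. The strategy is a careful case analysis on how $X$ meets $A$ and $B$ and on which of the closeness/connectivity requirements fails, extracting in each case an explicit sub-collection of at most $5$ vertices of $X$ that cannot jointly lie in any minimal separator. I would prove the contrapositive in the form: if every at-most-$5$-subset of $X$ lies in a minimal separator, then in particular $X_A \cup X_B$ is a biclique, and moreover each bounded certificate that would rule out $X$ is absent; then I would explicitly construct a minimal separator containing $X$, using \autoref{prop:chbip-vertex-sep}, \autoref{prop:close}, and \autoref{prop:close-neighbor} to verify minimality and the two-close-component characterization.

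The main obstacle, I expect, is the construction and verification in the converse direction: given that all small subsets are ``separator-feasible,'' actually building a single minimal separator $S \supseteq X$ and certifying its minimality. Complete-bipartiteness of $S$ is forced by \autoref{prop:chbip-vertex-sep} and checkable by pairs, but minimality of a separator is a global property — one must exhibit two close components and ensure no proper subset of $S$ already disconnects the graph. Translating this global minimality requirement into the statement that a \emph{bounded} (at most $5$) local witness suffices is the delicate part; it is precisely where the chordal bipartite structure, via the single-vertex closeness property of \autoref{prop:close-neighbor}, must be exploited to keep the witness size constant rather than growing with $|X|$. I would expect the tightest part of the case analysis to be when $X$ meets both $A$ and $B$ nontrivially, since then both the biclique and the two-sided closeness conditions interact, and the count of certificate vertices is most likely to approach the bound $5$.
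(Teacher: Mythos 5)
There is a genuine gap here: your text is a plan that stops exactly where the proof has to begin. You correctly observe that complete-bipartiteness of minimal separators (\autoref{prop:chbip-vertex-sep}) gives a pairwise-checkable necessary condition, and you correctly identify that the hard part is the converse — given that all small subsets of $X$ are ``separator-feasible,'' one must either build a minimal separator containing $X$ or extract a bounded witness of failure. But you never resolve that step; you only announce a ``careful case analysis'' whose cases and certificates are left unspecified. Worse, the route you propose (explicitly constructing a minimal separator $S\supseteq X$ and certifying its minimality via two close components) is precisely the global construction you yourself flag as the main obstacle, and there is no indication of how to carry it out; minimality of a separator is not a locally certifiable property, so this direction is unlikely to close.

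The paper's proof avoids constructing any separator containing $X$ altogether. It takes a \emph{minimum-cardinality} violator $X$ with $|X|=p\geq 6$; minimality forces, for each $x_i\in X$, a minimal separator $E_i$ with $E_i\cap X=X\setminus\{x_i\}$. By pigeonhole, three vertices $a_1,a_2,a_3$ of $X$ lie on one side $A$ of the bipartition. For each $a_i$, the separator $E_i$ omits $a_i$ but contains the other two; by \autoref{prop:close} it has a close component $C_i$ avoiding $a_i$, and by \autoref{prop:close-neighbor} that component contains a vertex $c_i$ with $N(c_i)\cap A=E_i\cap A$, hence adjacent to the other two $a_j$'s but not to $a_i$. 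The three $c_i$'s lie on the other side of the bipartition, so they are pairwise non-adjacent, and $(a_1,c_1,a_2,c_2,a_3,c_3)$ is an induced $C_6$ — contradicting chordal bipartiteness. This identification of the obstruction as a long induced cycle is the key idea missing from your proposal, and it is what makes the constant $5$ appear (it guarantees three same-side vertices in any minimum violator). Your draft never mentions $C_{\geq 6}$-freeness, which is the only place the chordal bipartite hypothesis genuinely bites beyond the structural facts about separators.
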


\iflongelse{
\begin{proof}
    Let us put $\H\coloneq \S(G)$.
    Assume that $\H$ is not of conformality ${5}$, i.e., there is a subset $X\subseteq V(\H)$ that is not contained in a hyperedge of $\H$, but each of its subsets $Y\subseteq X$ of size at most $5$ are contained in a hyperedge of $\H$.
    Here, by \emph{contained} we mean as a subset.
    We consider $X=\{x_1,\dots, x_p\}$ of minimum cardinality.
    Then ${p\geq 6}$ as otherwise, by assumption, $X$ would be contained in a hyperedge of $\H$.
    Moreover, for every $x_i\in X$, note that there must exist a hyperedge $E_i$ of $\H$ such that $E_i\cap X=X\setminus \{x_i\}$: we prove this depending on the value of $p$.
    In the first case, if $p \geq 7$, the nonexistence of such $E_i$ implies that $X'=X\setminus \{x_i\}$ is not contained in a hyperedge of $\H$, and still every subset $Y\subseteq X'$ of size at most $5$ is, contradicting the minimality of $X$; if, however, $p=6$, by definition we have that every $Y \subsetneq X$ of size 5 is contained in some $E_Y \in E(\H)$, but $X \nsubseteq E_Y$ since $\H$ is not of conformality 5.

    Since $X$ has cardinality at least $6$, there exist at least three vertices $a_1,a_2,a_3$ in $X$ lying in one side of the bipartition of $G$, call it $A$.
    Let $E_1$ be an edge of $\H$, i.e., a minimal separator of $G$ that does not contain $a_1$, which exists by our previous arguments; moreover, $E_1 \cap X = X \setminus \{a_1\}$, implying $a_2,a_3 \in E_1$.
    By the minimality of $E_1$ and \autoref{prop:close} there are at least two components that are close to $E_1$.
    Let $C_1$ be one that does \textit{not} contain $a_1$.
    By \autoref{prop:close-neighbor} there exists $c_1$ in $C_1$ that satisfies $N(c_1) \cap A = E_1 \cap A$, and so $c_1$ is adjacent to both $a_2$ and $a_3$.
    However note that $c_1$ is not adjacent to $a_1$ as we chose $C_1$ to be the component not containing $a_1$.
    
    Consider now the separator $E_2$.
    Again, by minimality there are at least two components that are close to $E_2$. 
    Let $C_2$ be the one that does not contain $a_2$.
    By \autoref{prop:close-neighbor} there exists $c_2$ in $C_2$ that is adjacent to both $a_1$ and $a_3$.
    However $a_2$ and $c_2$ are non adjacent.
    
    We repeat this procedure for $E_3$ and obtain $c_3$ that is adjacent to both $a_1$ and $a_2$ but not $a_3$.
    Note that as $a_1,a_2,a_3$ lie in $A$, $c_1,c_2,c_3$ lie in the other side of the bipartition and are thus pairwise non adjacent.
    We obtain an induced cycle $(a_1, c_1, a_2, c_2, a_3, c_3, a_1)$ contradicting the fact that $G$ is $C_{\geq 6}$-free.
\end{proof}}{}

\iflongelse{We derive the following as a corollary of \autoref{theorem:conformality} and \autoref{lemma:conf}.}{\vspace{-.2cm}}

\begin{theorem}\label{thm:mcds-incp}
    The minimal connected dominating sets of a chordal bipartite graph can be listed in incremental polynomial time.
\end{theorem}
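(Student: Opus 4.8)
The plan is to obtain \autoref{thm:mcds-incp} as an immediate corollary of the machinery already assembled in this section, so the work is really to verify that each hypothesis of \autoref{theorem:conformality} is met by the hypergraph $\S(G)$ of minimal separators of a chordal bipartite graph $G$. First I would recall, via \autoref{prop:mcds-separators}, that enumerating $\mcds(G)$ is exactly enumerating $Tr(\S(G))$, so it suffices to enumerate the minimal transversals of $\S(G)$ in incremental-polynomial time. The target theorem requires three things: that we can actually produce the hypergraph (or at least have it available), that it is Sperner, and that it has bounded conformality.

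The key steps, in order, are as follows. First, I would invoke \autoref{prop:chbip-sep-enum} to list $\S(G)$ explicitly in polynomial time in $n$; this is crucial because \autoref{theorem:conformality} is stated for an input hypergraph, and unlike the general situation for connected domination (where the separator hypergraph is only implicit and intractable to compute), here we have it in hand. Second, I would note that $\S(G)$ is Sperner by definition: its members are the inclusion-wise minimal separators, so no one contains another, and thus the Sperner hypothesis is satisfied without any extra argument. Third, I would apply \autoref{lemma:conf}, which establishes that $\S(G)$ has conformality $5$, hence bounded conformality. With these three facts in place, \autoref{theorem:conformality} applies verbatim to $\H \coloneq \S(G)$ and yields an incremental-polynomial time enumeration of $Tr(\S(G)) = \mcds(G)$.

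The only subtlety worth flagging is a bookkeeping one about the running time: \autoref{theorem:conformality} gives incremental-polynomial time measured in the size of its input hypergraph plus the number of solutions produced, whereas the statement of \autoref{thm:mcds-incp} is phrased in terms of the original graph $G$. I would close this gap by observing that, by \autoref{prop:chbip-sep-enum}, the hypergraph $\S(G)$ has polynomially many hyperedges, each of size at most $n$, so $|\S(G)|$ is polynomial in $n$ and the construction cost is absorbed into the overall bound; an incremental-polynomial bound in $|V(\H)| + |E(\H)|$ therefore translates to one polynomial in $n$ plus the index of the current output.

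I do not expect any genuine obstacle here, since all the real difficulty has already been discharged: the conformality bound (\autoref{lemma:conf}) is the substantive lemma, and the tractable listing of minimal separators (\autoref{prop:chbip-sep-enum}) is the other essential ingredient. The proof of \autoref{thm:mcds-incp} is thus a short assembly, and if anything the main point requiring care is simply to make explicit that $\S(G)$ is both computable and Sperner so that \autoref{theorem:conformality} can be applied as a black box.
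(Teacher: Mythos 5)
Your proposal is correct and matches the paper's argument exactly: the paper also obtains \autoref{thm:mcds-incp} as a direct corollary of \autoref{theorem:conformality} and \autoref{lemma:conf}, using \autoref{prop:mcds-separators} to reduce to \transenum{} on $\S(G)$ and \autoref{prop:chbip-sep-enum} to construct that hypergraph in polynomial time. Your explicit remarks that $\S(G)$ is Sperner by minimality and that the running-time bound translates from the hypergraph to the graph are sensible bookkeeping that the paper leaves implicit.
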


\section{Limitations of the sequential method}\label{sec:sequential-limit}

It is already known that the sequential method may not directly provide output-polynomial time algorithms for \transenum{} in general; see e.g.~\cite{bartier2024hypergraph} for explicit statements.
We adapt the construction in~\cite{bartier2024hypergraph} to show that this still holds for hypergraphs of (open or closed) neighborhoods of bipartite graphs, i.e., that we may not expect to get tractable algorithms for \tdomenum{} and \domenum{} using this technique.

We begin with \domenum{} and later detail the modifications to be performed for the statement to hold for \tdomenum{}.
In the following by $\NN(G)$ we mean the hypergraph of closed neighborhoods of $G$, whose minimal transversals are precisely the minimal dominating sets of $G$; see \autoref{sec:prelim}.

Recall that in \mcis{} problem, we are given a graph $G$, an integer $k$, and a
partition $\mathcal{V}=(V_1, V_2, \dots, V_k)$ of $V(G)$ such that $V_i$ is an independent set for every $1\leq i \leq k$.
Sets $V_1,\dots,V_k$ are usually referred to as color classes.
The goal is then to decide whether there exists an independent set in $G$ containing precisely one vertex from each $V_i$, i.e., that is multicolored.
See~\cite{cygan2015parameterized} for more details on this problem.

\begin{figure}
    \centering
    \includegraphics[scale=\figscale]{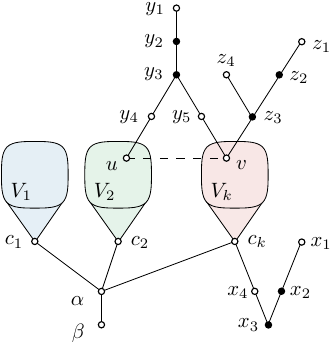}
\caption{The construction of $H$, where vertices are ordered arbitrarily except for $\alpha$ and $\beta$ which are put last, in this order. Color classes $V_1,\dots,V_k$ induce independent sets in $H$. The dashed edge denote an edge of $G$ that is not part of $H$. Black vertices represent $T^\star$. Note that there is one copy of the depicted gadgets for each color class, vertex, and edge of $G$.}\label{fig:example-reduction}
\end{figure}

\iflongelse{}{Due to space constraints, we omit the formal description of the reduction and only include the  statement we derive. We refer the reader to \autoref{fig:example-reduction} for an idea of the construction, and to the appendix for its formal description and proof. The idea is to show that a minimal transversal $T^\star$ of $\H$ induced by all but two of its vertices can be extended into a non-trivial child if and only if an instance $G$ of multicolored independent set admits a solution.}

\begin{lemma}\label{lem:hardness-sequential-bipartite}
    Let \((G, \mathcal{V})\) be an instance of \mcis{}. Then there exists a bipartite graph \(H\), an ordering \(v_{1}, \dots, v_{n}\) of its vertices, and a minimal transversal $T^\star$ of $\NN(H)_{n-2}$, such that the sets in 
    $\children(T^\star, n-2)$ are all multicolored independent sets of \(G\) except for one.
\end{lemma}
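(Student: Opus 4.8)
The plan is to build a bipartite graph $H$ from the \mcis{} instance $(G,\mathcal{V})$ so that the non-trivial children of a carefully chosen partial solution $T^\star$ correspond exactly to multicolored independent sets of $G$. I would place two special vertices $\alpha,\beta$ last in the ordering (with $\beta=v_n$), so that the relevant minimal extensions live in $\Delta_n\subseteq\inc_n(\beta)$, as in \autoref{lem:extensions}. The idea, following the gadget in \autoref{fig:example-reduction}, is that each color class $V_j$ contributes a ``selection'' gadget whose only way to be covered by a child is to pick exactly one vertex of $V_j$, while each edge of $G$ contributes an ``obstruction'' gadget that forbids a child from simultaneously choosing both endpoints of that edge. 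Thus a child of $T^\star$ that properly extends it must select one vertex per color class and avoid picking both endpoints of any edge — precisely the definition of a multicolored independent set.

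First I would describe the vertex set: one vertex for each vertex of $G$ (grouped into the independent color classes $V_1,\dots,V_k$, which remain independent in $H$ by construction), plus one gadget vertex per color class, one per edge of $G$, and the two terminal vertices $\alpha,\beta$. The edges of $H$ would be set so that (i) $H$ is bipartite — I would put $V(G)$ on one side and the gadget/terminal vertices on the other, checking that no edge of $G$ itself is added (the dashed edges in the figure indicate edges of $G$ that are deliberately \emph{not} present in $H$); (ii) the closed neighborhoods $N[\cdot]$ restricted to $V_{n-1}=V(H)\setminus\{\alpha,\beta\}$ realize exactly the hyperedges of $\Delta_n$ that encode the selection and obstruction constraints. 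I would then exhibit the concrete $T^\star\in Tr(\NN(H)_{n-2})$ (the black vertices in the figure), verify it is indeed a minimal transversal of $\NN(H)_{n-2}$ by checking each of its elements retains a private edge, and compute $\Delta_{n-1}$ and the resulting family of minimal extensions via \autoref{lem:extension-characterization}'s analogue for this instance.

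The core of the argument is a two-way correspondence. For the forward direction I would show that any non-trivial $T\in\children(T^\star,n-2)$, written $T=T^\star\cup X$ with $X\in Tr(\Delta_{n-1})$ by \autoref{lem:extensions}, must hit every selection gadget (forcing at least one vertex per class) and every edge gadget, and that minimality of $T$ together with the private-edge requirement forces $X$ to contain \emph{exactly} one vertex per color class and no two adjacent vertices of $G$; hence $X\cap V(G)$ is a multicolored independent set. Conversely, given a multicolored independent set $I$ of $G$, I would check that $T^\star\cup I$ (possibly together with a fixed set of forced gadget vertices) is a minimal transversal of $\NN(H)_{n-1}$ whose parent with respect to $n-1$ is $T^\star$, i.e.\ a genuine child. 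The phrase ``except for one'' in the statement I would account for by the single trivial/degenerate child that always arises — typically the extension using $\alpha$ (or $\{v_{n-1}\}$) guaranteed by \autoref{lem:at-least-one-child} — which does not encode an independent set; every \emph{other} child is multicolored independent.

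I expect the main obstacle to be the bookkeeping that makes the private-edge conditions enforce the two constraints \emph{simultaneously and exactly}: the selection gadgets must make picking fewer than one vertex per class leave an uncovered edge (so every child selects at least one per class), while minimality and the obstruction gadgets must prevent picking two per class or two adjacent vertices (so every child selects at most one per class and stays independent). Getting these gadgets to cooperate so that the minimal transversals of $\Delta_{n-1}$ are in clean bijection with multicolored independent sets — rather than with some larger or smaller family — is the delicate part, and is exactly where I would lean on the construction of \cite{bartier2024hypergraph} and adapt it to the closed-neighborhood (domination) setting while keeping $H$ bipartite.
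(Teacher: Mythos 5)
Your overall architecture coincides with the paper's: build a bipartite $H$ from $(G,\mathcal{V})$ with a selection gadget per color class and an obstruction gadget per edge, place $\alpha,\beta$ last so that the only hyperedges left uncovered by $T^\star$ are the $N[c_i]$, and account for the ``except for one'' child by the extension $\{\alpha\}$. The gap is that the construction you actually specify --- ``one gadget vertex per color class, one per edge of $G$'' --- cannot realize the private-edge bookkeeping that you yourself identify as the crux, and you omit a third family of gadgets that the argument cannot do without. Concretely: (i) with a single vertex $e_{uv}$ per edge $uv$, every candidate private edge of $e_{uv}$ (namely $N[e_{uv}]$, $N[u]$, $N[v]$) contains $u$ or $v$, so picking even \emph{one} endpoint strips $e_{uv}$ of all its private edges and the parent computation deletes it from $T^\star\cup X$; the gadget then forbids picking \emph{any} endpoint rather than \emph{both}, which destroys the correspondence in the forward direction. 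The paper's fix is a subdivided claw $y_1,\dots,y_5,u,v$ with $y_2,y_3\in T^\star$, so that $y_3$ has exactly two private edges $N[y_4]\ni u$ and $N[y_5]\ni v$ and loses both precisely when both endpoints are chosen. (ii) A bare vertex $c_i$ does not prevent a child from covering $N[c_i]$ by selecting $c_i$ itself instead of a vertex of $V_i$; the paper attaches a path $x_1x_2x_3x_4$ to $c_i$ with $x_2,x_3\in T^\star$ so that $N[x_4]$ is the unique private edge of $x_3$ and contains $c_i$, which rules $c_i$ out of any child. (iii) You provide no mechanism by which the hyperedges $N[v]$ for $v\in V(G)$ are hit by $T^\star$ while the $G$-vertices remain freely selectable: since $H$ carries no edge of $G$ and $c_i\notin T^\star$, something must dominate each $v$ without turning $N[v]$ into a private edge that a selection would steal; the paper uses a pendant path $z_1z_2z_3z_4$ attached to each $v$ with $z_2,z_3\in T^\star$.

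These three gadget designs, together with the verification that $T^\star$ is a minimal transversal of $\NN(H)_{n-2}$ and that no element of $T^\star$ loses its last private edge under a legitimate selection, constitute essentially the entire content of the lemma; deferring them to an adaptation of \cite{bartier2024hypergraph} leaves the proof incomplete, and the inventory you do give would fail for the reason in (i). A minor additional slip: the nontrivial extensions of $(T^\star,n-2)$ live in $\Delta_{n-1}\subseteq\inc_{n-1}(\alpha)$ (children are minimal transversals of $\H_{n-1}$), not in $\Delta_{n}\subseteq\inc_{n}(\beta)$ as you first write, although you use the correct index later on.
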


\iflongelse{
\begin{proof}
    Let \((G, \mathcal{V})\) be an instance of \mcis{} where \(\mathcal{V} = (V_{1}, \dots, V_{k})\) is a partition of $G$ and each $V_i$ is one of $k$ color classes, $n = |V(G)|$ and $m = |E(G)|$.
    We describe the construction of a bipartite graph $H$ from $G$ as follows, and create a set $T^\star$ of vertices along the way.
    See \autoref{fig:example-reduction} for an illustration.

    We start by adding to $H$ the set of vertices $V(G)$, but none of the edges of $G$.
    For each color class $V_{i}$, $1\leq i\leq k$ we create a special vertex $c_i$ that we make adjacent to all vertices of $V_{i}$.
    Then, again for each color class, we create a path on four vertices $x_1,x_2,x_3,x_4$ and attach it to $c_i$ by adding the edge $c_ix_4$.
    We add $x_2$ and $x_3$ to $T^\star$.
    The role of this gadget is to leave $c_j$ non-dominated by $T^\star$, yet to forbid $c_j$ in a minimal dominating set containing $T^\star$.
    In the following, we define $C = \{c_i : i \in \{1,\dots, k\}\}$.
    
    For every edge $uv$ in $G$ we construct a subdivided claw $y_1, y_2, y_3, y_4, y_5, u, v$, with $N(y_2) = \{y_1, y_3\}$, $N(y_4) = \{y_3, u\}$, $N(y_5) = \{y_3, v\}$ and no other edges touch the $y$ vertices.
    Also, we add $y_2$ and $y_3$ to $T^\star$.
    The objective of this gadget is to forbid the selection of \emph{both} endpoints \(u\) and \(v\) of an edge in a minimal dominating set containing $T^\star$.
    Note that we construct one such gadget per edge of $G$, but that the vertices $u$ and $v$ are shared by other edge gadgets.
    
    For every vertex $v$ of $G$, we create a distinct path on four vertices $z_1,z_2,z_3,z_4$, and attach it to $v$ with the edge $vz_3$.
    We add $z_2$ and $z_3$ to $T^\star$.
    This ensures that each such $v$ is dominated by $T^\star$.
    Moreover, note that picking $v$ in a minimal dominating set  containing $T^\star$ will not steal private neighbors of an element in $T^\star$, and will dominate $c_i$ for $i$ such that $v\in V_i$.

    Finally, we create two additional vertices $\alpha$ and $\beta$ with $\alpha$ adjacent to each $c_i$, $1\leq i\leq k$, and $\beta$ only adjacent to $\alpha$. 
    We fix the ordering of vertices of $H$ to be arbitrary but ending with $\alpha,\beta$ in this order.
    
    This completes the construction of $H$.
    It has $O(n+m)$ vertices and edges.
    Moreover, it is easily seen that $H$ is bipartite, by taking $V(G)$ on one side of the bipartition, and noting that except for $\alpha$ and $\beta$ the gadgets only create even paths between pairs of vertices attached to $V(G)$. 
    Then $\alpha$ is made adjacent to the $c_j$'s which lie in the opposite partition of $V(G)$, and $\beta$ is a pendant vertex.
    Thus they do not create odd cycles.
    See also \autoref{fig:example-reduction} for a representation of $H$ in layers where each edge is between consecutive layers. 

    Let us define $\H\coloneq \NN(H)$ and show that $T^\star$ is a minimal transversal of $\H_{n-2} = \H\left[V(H) \setminus \{\alpha, \beta\}\right]$.
    First note that all neighborhoods except for those of $c_i$'s, $\alpha$ and $\beta$ are included in the $n-2$ first vertices, hence belong to $\H_{n-2}$.
    Moreover, all these neighborhoods are intersected by $T^\star$ as every gadget is dominated by $T^\star$ by construction, and each vertex $v\in V(G)$ is dominated by the vertex $z_3$ of its corresponding vertex gadget.
    Furthermore, $T^\star$ is minimal with that property as each of its element $t$ has a private neighbor in its corresponding gadget, whose edge is thus only intersected by $t$ in $T^\star$.

    Let us now consider the different possibilities for a minimal transversal $T$ of $\H_{n-1}$ to contain $T^\star$ as a subset.
    Note that only the neighborhoods of $c_i$ for each $1\leq i\leq k$ are to be dominated. 
    To do so, we either pick $\alpha$, or must select a subset $X$ of $V(G)\cup C$ in order to minimally hit $N[c_i]$ for all $1\leq i\leq k$.
    Note however that as $x_4$ is the only private neighbor of $x_3$ in the color class gadget of $V_i$, $X$ cannot contain $c_i$ nor $x_1$ for any color.
    Hence $X\subseteq V(G)$.
    Note also that no two distinct elements of $X$ intersect a same color class $V_i$ as their only private neighbor would be $c_i$, and hence one of the two would be redundant.
    Moreover by the edge gadget, since $y_4$ and $y_5$ are the only two private neighbors of $y_3$, $X$ cannot contain both $u$ and $v$ for any edge $uv$ of $G$.
    Thus $X$ intersects each color class precisely once, and does not contain two vertices $u,v$ with $uv\in E(G)$.
    We conclude that $X$ is a multicolored independent set of $G$, hence that $T^\star$ can be extended into a minimal transversal of $\H_{n-1}$ using a different set than $\{\alpha\}$ if and only if $G$ contains a multicolored independent set. 
    
    Finally, recall that by 
    \autoref{lem:at-least-one-child} and
    \autoref{lem:extensions} either $\children(T^\star,n-2)=\{T^\star\}$, or sets in $\children(T^\star,n-2)$ are of the form $T^\star\cup X$ for some $X\in Tr(\Delta_{n-1})$ where 
    \[
        \Delta_{n-1}\coloneq \{E\in \H_{n-1} : E\cap T^{\star}=\emptyset\}=\{N[c_i] : 1\leq i\leq k\}.
    \]
    Since $T^\star$ is not a transversal of $\H_{n-1}$ we are in the second case.
    Moreover, by the above discussion, each such $X\in Tr(\Delta_{n-1})$ is either $\{\alpha\}$ or is a multicolored independent set of $G$.
    We thus need to show that any multicolored independent set of $G$ yields a children to conclude the proof.

    Let $X$ be a multicolored independent set of $G$ and suppose that $T^\star\cup X$ is not a child of $(T^\star,n-2)$.
    Then during the computation of the parent of $T^\star\cup X$, some element $v\in T^\star$ is removed.
    Hence such a $v$ belongs to one of the edge, vertex, of color class gadgets.
    But for these element to lose a private neighbor it should be that $X$ contains two elements of a same color class, or two endpoints of an edge, a contradiction.
\end{proof}}{}

A consequence of \autoref{lem:hardness-sequential-bipartite} is that the sequential approach may not be directly applied to solve~\domenum{}, even in bipartite graphs.
Moreover, \iflongelse{note that }{}in our construction, no vertex $v$ of $T^\star$ is made self-private, i.e., no $v$ has its closed neighborhood as a private edge.
Furthermore, the elements in the color classes are dominated by $T^\star$, and their role is thus limited to dominate the $c_i$'s in an extension. 
Hence, by keeping the same construction, all the arguments of the \iflongelse{above}{} proof hold for the hypergraph of \emph{open} neighborhoods of $G$.
We derive that the sequential approach may not be directly be used for \tdomenum{} neither%
\iflongelse{}{.}%
\iflongelse{, as summarized in \autoref{lem:hardness-sequential-bipartite-mtds}, where $\NN^o(H)$ is the hypergraph of open neighborhoods of $H$.

\begin{lemma}\label{lem:hardness-sequential-bipartite-mtds}
    Let \((G, \mathcal{V})\) be an instance of \mcis{}. Then there exists a bipartite graph \(H\), an ordering \(v_{1}, \dots, v_{n}\) of its vertices, and a minimal transversal $T^\star$ of $\NN^o(H)_{n-2}$, such that the sets in 
    $\children(T^\star, n-2)$ are all multicolored independent sets of \(G\) except for one. 
\end{lemma}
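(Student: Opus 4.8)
The plan is to reuse verbatim the construction of $H$ and the partial solution $T^\star$ from the proof of \autoref{lem:hardness-sequential-bipartite}, and observe that essentially the same analysis goes through once closed neighborhoods are replaced by open neighborhoods. The key structural feature to exploit is that in the construction no vertex of $T^\star$ is \emph{self-private}: for every $t\in T^\star$, its unique private edge in $\H_{n-2}$ is the (closed) neighborhood of some \emph{other} vertex lying in the same gadget, never $N[t]$ itself. Since each such $t$ has degree at least two inside its gadget and its private neighbor $w$ is hit only because $t\in N[w]$, the same edge $N(w)$ (open neighborhood) is still a private edge of $t$ when we pass to $\NN^o(H)$. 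Thus $T^\star$ remains a minimal transversal of $\NN^o(H)_{n-2}$.

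First I would check the transversal and minimality conditions for $T^\star$ in the open-neighborhood hypergraph. Every open neighborhood $N(v)$ with $v\in V(H)\setminus\{\alpha,\beta,c_1,\dots,c_k\}$ is contained in $V_{n-2}$ and is hit by $T^\star$, exactly as in the closed case, because domination in all gadgets was achieved by placing the pair $x_2,x_3$ (resp.\ $y_2,y_3$, resp.\ $z_2,z_3$) with one of them adjacent to the relevant vertex; each vertex $v\in V(G)$ is now hit via $N(z_3)\ni v$. Minimality follows since each $t\in T^\star$ keeps a private edge $N(w)$ as argued above. The only edges of $\H_{n-2}=\NN^o(H)_{n-2}$ left unhit after $\alpha,\beta$ are removed are the open neighborhoods corresponding to the $c_i$; a short computation shows $\Delta_{n-1}=\{N(c_i):1\le i\le k\}$ is unchanged in content from the closed case, because each $c_i$ is not itself in $V_{n-2}$ and its neighbors in $V_{n-2}$ are precisely $V_i\cup\{x_4^{(i)}\}$.

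Next I would rerun the extension analysis. To hit each $N(c_i)$ while containing $T^\star$, one either picks $\alpha$ (which lies in $N(c_i)$ for all $i$) or a set $X\subseteq V(G)\cup C\cup\{x_4^{(i)}\}$. The color-class gadget still forbids selecting $c_i$ or $x_4^{(i)}$: adding $x_4^{(i)}$ would steal the private edge $N(x_3^{(i)})$ from $x_2^{(i)}$ or $x_3^{(i)}\in T^\star$, and adding $c_i$ would similarly collide, so $X\subseteq V(G)$. The edge and vertex gadgets enforce, as before, that $X$ meets each color class exactly once and contains no two endpoints of an edge of $G$, so the non-trivial children $T^\star\cup X$ correspond bijectively to multicolored independent sets of $G$, with the single exception $X=\{\alpha\}$. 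Finally I would verify that every multicolored independent set $X$ genuinely yields a child, i.e.\ that no element of $T^\star$ loses its private edge when $X$ is added, which holds precisely because $X$ avoids the two ``conflict'' situations the gadgets were designed to block.

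The main obstacle, and the only place where care is genuinely needed, is the minimality-preservation step when moving from closed to open neighborhoods: one must confirm that the private edge witnessing each $t\in T^\star$ in $\NN(H)$ was never $N[t]$ itself (so that it survives as a genuine open-neighborhood edge $N(w)$), and symmetrically that adding a candidate extension vertex does not \emph{newly} dominate some $t\in T^\star$ into self-domination, which is irrelevant here since open neighborhoods ignore self-domination entirely. Because the gadgets were built with internal paths $x_1x_2x_3x_4$, $y_1y_2y_3$, $z_1z_2z_3z_4$ in which each $T^\star$-vertex has a genuine pendant-like private neighbor distinct from itself, this check is routine, and the remainder of the proof is identical to that of \autoref{lem:hardness-sequential-bipartite}.
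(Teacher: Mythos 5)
Your overall strategy --- reuse the construction of \autoref{lem:hardness-sequential-bipartite} verbatim and argue that nothing changes when closed neighborhoods are replaced by open ones --- is exactly the route the paper takes (its own justification is a short remark to the same effect), and your verification that $T^\star$ remains a minimal transversal of $\NN^o(H)_{n-2}$ is correct. However, two of your explicit claims fail for the unmodified construction, and both are load-bearing. First, $\Delta_{n-1}$ is \emph{not} unchanged: the open neighborhood $N(\beta)=\{\alpha\}$ is contained in $V_{n-1}=V(H)\setminus\{\beta\}$, so it is a hyperedge of $\NN^o(H)_{n-1}$ (whereas $N[\beta]=\{\alpha,\beta\}$ contains $v_n=\beta$ and is therefore absent from the closed-neighborhood version of $\H_{n-1}$). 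Since this singleton edge is disjoint from $T^\star$, it lies in $\Delta_{n-1}$ and forces $\alpha$ into every minimal transversal of $\Delta_{n-1}$; as $\alpha$ already hits every $N(c_i)$, we get $Tr(\Delta_{n-1})=\{\{\alpha\}\}$ and the set of children collapses to the single set $T^\star\cup\{\alpha\}$, so no children encode multicolored independent sets at all.

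Second, even setting $\beta$ aside, the forbidding gadgets rely on the relevant vertex of $T^\star$ having a \emph{unique} private edge, and this uniqueness is destroyed by the passage to open neighborhoods: stealing one private edge does not make a vertex redundant if it retains another. Concretely, $N(y_2)=\{y_1,y_3\}$ is hit only by $y_3$ (unlike $N[y_2]$, which is also hit by $y_2$), so $y_3$ keeps this private edge even when both endpoints $u,v$ of an edge of $G$ are selected and $N(y_4),N(y_5)$ are lost; hence the edge gadget no longer forbids picking adjacent vertices. Likewise $x_2$ retains the singleton private edge $N(x_1)=\{x_2\}$ after $x_4^{(i)}$ is added, so $\{x_4^{(i)}:1\leq i\leq k\}$ becomes a legal minimal transversal of the $N(c_i)$'s, yielding a child that is not a multicolored independent set; your step ``adding $x_4^{(i)}$ would steal the private edge $N(x_3^{(i)})$ from $x_2^{(i)}$'' is true but insufficient for redundancy. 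Repairing the lemma requires actual modifications to the construction (e.g., rewiring $\beta$ so that $N(\beta)$ is hit by $T^\star$, and redesigning the edge and color-class gadgets so that $y_3$ and $x_3$ have unique private open neighborhoods), not just a re-reading of the closed-neighborhood proof; to be fair, the paper's own one-paragraph justification glosses over the same points.
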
}{}

\section{\transenum{} and bipartite graphs\protect\footnote{After the submission of this work, it has come to our attention that there is a recent stronger result on the subject~\cite{capacitated_vc_enum}. We nevertheless leave the proof as is since it is simple and for self-containment.
}}
\label{sec:mcds-hardness}

Our final result is a proof that \transenum{} reduces to \cdomenum{} on bipartite graphs, which we are not aware of being previously known or directly implied by other results. In fact, by \autoref{prop:mcds-separators}, if we can construct a graph where (almost) every minimal separator is in a one to one correspondence to our hyperedges, we are done.

Let $\H$ be the input hypergraph to \transenum{} with $E(\H) = \{E_1, \dots, E_m\}$ and $V(\H) = \{u_1, \dots, u_n\}$ and assume w.l.o.g.~that $\H$ is a Sperner hypergraph. To obtain $G$, the input to \cdomenum{}, we start with the incidence bipartite graph of $\H$, that is, $\{v_1, \dots, v_n\}$, $\{e_1, \dots, e_m\} \subset V(G)$, where $v_i$ corresponds to $u_i \in V(\H)$ and $e_j$ corresponds to $E_j \in E(\H)$, and $v_ie_j \in E(G)$ if and only if $u_i \in E_j$. To conclude the construction of $G$, we add two vertices $v'$ and $v^\star$, the edge $v'v^\star$, and the edges $v^\star v_i$ for all $1\leq i\leq n$.
\iflongelse{}{Due to space constraints the proofs of the following are omitted as well.}

\begin{lemma}\label{lem:mcds-hardness}
    The minimal separators of $G$ are $N(e_1),\dots, N(e_m)$ and $\{v^\star\}$.
\end{lemma}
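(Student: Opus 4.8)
The plan is to prove the two directions separately: first that each of the listed sets $N(e_1),\dots,N(e_m)$ and $\{v^\star\}$ is indeed a minimal separator of $G$, and then that $G$ has no other minimal separators. Throughout I would use the explicit structure of $G$: the vertices $v_1,\dots,v_n$ (copies of $V(\H)$) and $e_1,\dots,e_m$ (copies of $E(\H)$) forming the incidence graph, together with the apex-like vertex $v^\star$ adjacent to $v'$ and to every $v_i$, and the pendant vertex $v'$.

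First I would verify that $\{v^\star\}$ is a minimal separator. Removing $v^\star$ isolates $v'$ (since $v'$ is only adjacent to $v^\star$), so $G-\{v^\star\}$ is disconnected; since a single vertex cannot have a proper nonempty subset, it is trivially minimal. Next I would handle each $N(e_j)$. Since $\H$ is Sperner and the $E_j$ are the hyperedges, $N(e_j)=\{v_i : u_i\in E_j\}$. Removing $N(e_j)$ disconnects $e_j$ from the rest of $G$: indeed $e_j$ has no neighbors left, so $\{e_j\}$ becomes its own component, while $v^\star$, $v'$, and the remaining $v_i$'s and $e_{j'}$'s stay connected through $v^\star$. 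For minimality I would argue that for any strict subset $N(e_j)\setminus\{v_i\}$, the graph stays connected: the vertex $v_i$ still reachable via $v^\star$ reconnects $e_j$ to the rest, since $v_ie_j\in E(G)$ and $v^\star v_i\in E(G)$. This uses that $e_j$ has degree exactly $|E_j|\ge 1$ and that every $v_i$ is adjacent to $v^\star$.

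The main obstacle, and the heart of the proof, is showing there are \emph{no other} minimal separators. Here I would appeal to \autoref{prop:close}: any minimal separator $S$ has at least two components of $G-S$ close to it. The key structural observation is that $v^\star$ is adjacent to almost everything ($v'$ and all $v_i$), so unless $v^\star\in S$, the component containing $v^\star$ will absorb most vertices and it becomes hard to have two close components. I would do a case analysis on whether $v^\star\in S$ and on which $v_i$ and $e_j$ lie in $S$. The delicate point is ruling out separators that mix several $N(e_j)$'s or that include $v^\star$ together with other vertices: one must check that such sets either fail to separate or strictly contain one of the listed separators, contradicting minimality. A clean way is to argue that if $e_j\notin S$ for some $j$ but $S$ separates $G$, then either $S\supseteq N(e_j)$ (forcing $S=N(e_j)$ by minimality) or the whole $e_j$-side stays connected to $v^\star$; and to dispatch the case $v^\star\in S$ by showing the only way to create a second close component beyond $\{v'\}$ is to take $S=\{v^\star\}$ exactly. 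I expect the bookkeeping in these cases to be the most error-prone step, and I would lean on the closeness condition to force each vertex of a candidate $S$ to have a neighbor in two distinct components, which quickly pins down $S$ to the claimed list.
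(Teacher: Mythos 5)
Your overall strategy matches the paper's: verify that $\{v^\star\}$ and each $N(e_j)$ are minimal separators, then show any further minimal separator is impossible. The completeness half of your plan is sound and, in fact, simpler than you fear: since $\{v^\star\}$ is itself a separator, no other minimal separator can contain $v^\star$; hence for any other candidate $S$ the component of $G-S$ containing $v^\star$ absorbs $v'$ and every surviving $v_i$, so every other component is a singleton $\{e_j\}$ with $N(e_j)\subseteq S$, and minimality forces $S=N(e_j)$. You do not need \autoref{prop:close} or a closeness-based case analysis at all; the paper closes this case in three lines along exactly these lines.

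There is, however, a concrete gap in your verification that $N(e_j)$ is a \emph{minimal} separator, and it sits precisely where the Sperner hypothesis does its work. You invoke Sperner only to justify $N(e_j)=\{v_i: u_i\in E_j\}$, which holds by construction regardless; but you then assert that after deleting $N(e_j)$ ``the remaining $v_i$'s and $e_{j'}$'s stay connected through $v^\star$,'' and that adding back any single $v_i$ reconnects the graph because it reconnects $e_j$. Neither claim is justified as written: a vertex $e_\ell$ with $\ell\neq j$ is not adjacent to $v^\star$, and if $E_\ell\subseteq E_j$ (respectively $E_\ell\subseteq E_j\setminus\{u_i\}$) then $e_\ell$ loses \emph{all} of its neighbors and becomes an isolated component, so $G-N(e_j)$ would have more than two components and $N(e_j)\setminus\{v_i\}$ would still be a separator, destroying minimality. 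The Sperner assumption is exactly what rules this out: $E_\ell\not\subseteq E_j$ guarantees each $e_\ell$ retains a neighbor $v_k\notin N(e_j)$, which is how the paper establishes that $G-N(e_j)$ has exactly the two components $\{e_j\}$ and the rest. Your argument needs this step made explicit; checking only that $e_j$ itself is reconnected is not enough to conclude that a proper subset of $N(e_j)$ fails to separate $G$.
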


\iflongelse{
\begin{proof}
    To see that $\{v^\star\} \in \S(G)$, observe that $G \setminus \{v^\star\}$ has two connected components: $\{v'\}$ and $V(G) \setminus \{v', v^\star\}$.
    As such, no other $S \in \S(G)$ contains $v^\star$, so it holds that every $v_i \notin S$ is in the same connected component as $v^\star$ and $v'$.
    Let $E_j \in E(\H)$ and observe that $N(e_j)$ separates $e_j$ from the rest of $G \setminus N(e_j)$; moreover, note that, since $\H$ is a Sperner hypergraph, every other $e_\ell \in V(G)$ has a neighbor $v_k \notin N(e_j)$ (as no two hyperedges satisfy a containment relation), so it holds that $G \setminus N(e_j)$ has exactly two components: $\{e_j\}$ and the rest of the graph.
    Thus, if we add back one vertex from $N(e_j)$, $e_j$ is no longer disconnected as it creates a $e_j$--$v^\star$ path.
    Now, towards a contradiction, assume that there is another $S \in \S(G)$ that is different from every $N(e_j)$ and from $\{v^\star\}$.
    Let $\{v', v^\star, v_1, \dots, v_n\} \setminus S \subseteq C_1$ be a non-empty connected component of $G \setminus S$; note that, because $V(G) \setminus (S \cup C_1) \subseteq \{e_1, \dots, e_m\}$, the other connected components of $G \setminus S$ (which must exist as $S$ is a separator) must have size one, but this implies that each $e_j \notin C_1 \cup S$ has $N(e_j) = N(e_j) \subseteq S$, a contradiction.
\end{proof}}{\vspace{-.3cm}}

\begin{theorem}\label{thm:mcds-hardness}
    \cdomenum{} on bipartite graphs is at least as hard as \transenum{}.
\end{theorem}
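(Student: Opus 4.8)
The plan is to reduce \transenum{} to \cdomenum{} on the bipartite graph $G$ constructed above, using \autoref{prop:mcds-separators} together with \autoref{lem:mcds-hardness}. By \autoref{prop:mcds-separators} we have $\mcds(G) = Tr(\S(G))$, and \autoref{lem:mcds-hardness} tells us exactly what $\S(G)$ is, namely $\{N(e_1), \dots, N(e_m), \{v^\star\}\}$. So an algorithm enumerating $\mcds(G)$ is exactly an algorithm enumerating $Tr(\S(G))$, and the task reduces to relating these minimal transversals back to $Tr(\H)$.

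First I would observe that the separator $\{v^\star\}$ is a singleton, so every transversal of $\S(G)$ must contain $v^\star$; conversely $v^\star$ alone hits this one edge and nothing else. Since $v^\star$ is adjacent to every $v_i$, it is contained in every $N(e_j)$, so $v^\star$ in fact hits all of the separators. This means $\{v^\star\}$ is already a transversal of $\S(G)$, but it is the \emph{only} minimal transversal containing $v^\star$ that matters as a special case; I would isolate it and argue that it corresponds to the (possibly degenerate) empty transversal of $\H$. The remaining minimal transversals of $\S(G)$ are those that do not rely on $v^\star$ to cover the $N(e_j)$'s — but since every minimal transversal must contain $v^\star$ (as $\{v^\star\}$ is an edge), the structure to exploit is that $N(e_j) = \{v_i : u_i \in E_j\} \cup \{v^\star\}$, so hitting $N(e_j)$ with a vertex other than $v^\star$ means selecting some $v_i$ with $u_i \in E_j$.

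The key correspondence I would then set up is: a minimal transversal $T$ of $\S(G)$ other than $\{v^\star\}$ is of the form $\{v^\star\} \cup \{v_i : u_i \in T'\}$ where $T'$ is a minimal transversal of $\H$. The forward direction uses that $T \setminus \{v^\star\}$ must hit every $N(e_j)$ (since all separators except $\{v^\star\}$ equal some $N(e_j)$, and $v^\star$ needs a private separator, which among the $N(e_j)$'s it cannot have unless some $N(e_j)$ has no $v_i$ selected), translating directly to $T'$ intersecting every $E_j$; minimality of $T$ forces minimality of $T'$ via the private-edge characterization of minimal transversals recalled in \autoref{sec:prelim}. The reverse direction reconstructs a valid minimal transversal of $\S(G)$ from any $T' \in Tr(\H)$. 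This bijection (modulo the single exceptional solution $\{v^\star\}$) is computable in polynomial time in both directions, so any output-polynomial, incremental-polynomial, or polynomial-delay algorithm for \cdomenum{} on $G$ yields one for \transenum{} on $\H$ with only polynomial overhead, establishing that \cdomenum{} on bipartite graphs is at least as hard as \transenum{}.

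The main obstacle I expect is the careful handling of the private-separator bookkeeping for $v^\star$: because $v^\star$ lies in every separator $N(e_j)$ as well as being its own singleton separator, one must verify precisely when $v^\star$ is forced into a minimal transversal and when its presence becomes redundant, so that the claimed bijection does not double-count or miss solutions. Concretely, the delicate point is showing that every minimal transversal of $\S(G)$ distinct from $\{v^\star\}$ uses the $v_i$'s (and not $v^\star$) to cover the $N(e_j)$'s, which is where the Sperner assumption on $\H$ — ensuring each $N(e_j)$ genuinely differs and no containment collapses two separators — does the essential work.
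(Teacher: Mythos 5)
Your overall strategy is the right one (and the same as the paper's): combine \autoref{prop:mcds-separators} with \autoref{lem:mcds-hardness} to identify $\mcds(G)$ with $Tr(\S(G))$ and then exhibit a polynomial-time bijection with $Tr(\H)$. However, there is a concrete error in how you read the construction. The vertex $v^\star$ is made adjacent to $v'$ and to every $v_i$, but \emph{not} to any $e_j$; hence $N(e_j)=\{v_i : u_i\in E_j\}$ and $v^\star\notin N(e_j)$. Your claims that ``$v^\star$ is contained in every $N(e_j)$,'' that $N(e_j)=\{v_i:u_i\in E_j\}\cup\{v^\star\}$, and that $\{v^\star\}$ is already a transversal of $\S(G)$ are all false (the last one fails whenever $E(\H)\neq\emptyset$). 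As a result, the ``exceptional solution'' $\{v^\star\}$ corresponding to ``the empty transversal of $\H$'' does not exist, and the delicate bookkeeping you anticipate --- deciding when $v^\star$ privately covers some $N(e_j)$, avoiding double counting --- is a non-issue: $v^\star$ cannot cover any $N(e_j)$ at all.

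The correct argument is simpler and exact. Since $\{v^\star\}\in\S(G)$ is a singleton hyperedge, every $T_G\in Tr(\S(G))$ contains $v^\star$, and $\{v^\star\}$ is automatically a private edge of $v^\star$ in $T_G$. The remaining hyperedges of $\S(G)$ are exactly the sets $N(e_j)=\{v_i:u_i\in E_j\}$, none of which contains $v^\star$, so $T_G\setminus\{v^\star\}$ must be a minimal transversal of $\{N(e_1),\dots,N(e_m)\}$, which is a verbatim copy of $\H$ on the vertices $v_1,\dots,v_n$. Thus $T_G\mapsto\{u_i : v_i\in T_G\setminus\{v^\star\}\}$ is a genuine bijection between $Tr(\S(G))$ and $Tr(\H)$, with no exceptional case, and the hardness transfer follows as you state. (The Sperner assumption is used in \autoref{lem:mcds-hardness} to pin down $\S(G)$, not in the bijection itself.)
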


\iflongelse{
\begin{proof}
    By our construction and \autoref{lem:mcds-hardness}, each $T_G \in Tr(\S(G))$ contains $v^\star$ and $T_G \setminus \{v^\star\}$ minimally hits each $N(e_j) = \{v_i : u_i \in E_j\}$; thus $T_\H(T_G) = \{u_i : v_i \in T_G \setminus \{v^\star\} \}$ is a bijection between $Tr(\S(G))$ and $Tr(\H)$.
    By \autoref{prop:mcds-separators}, $\mcds(G)=Tr(\S(G))$, so there is a bijection between $\mcds(G)$ and $Tr(\H)$ and the result follows.
\end{proof}}{\vspace{-.3cm}}

\section{Discussion}\label{sec:conclusion}

In this paper, we investigated the complexity of \domenum{} and its two variants \tdomenum{} and \cdomenum{} in chordal bipartite graphs.
We obtained polynomial-delay algorithms for the first two problems, and an incremental-polynomial time algorithm for the last problem.
Then we gave evidence that the techniques used for \domenum{} and \tdomenum{} may not be pushed directly to bipartite graphs.
As of \cdomenum{}, we proved a stronger statement that \cdomenum{} is as hard as \transenum{} in bipartite graphs, with the later problem being open since more than forty years~\cite{eiter1995identifying}.

We leave open the existence of polynomial-delay algorithms for \domenum{} and \tdomenum{} in bipartite graphs, and for \cdomenum{} in chordal bipartite graphs.
Concerning this last case, we note that obtaining polynomial delay is open for \transenum{} on hypergraph of bounded dimension, which define a proper subclass of hypergraphs of bounded conformality.
Hence, in order to improve \autoref{thm:mcds-incp}, one either has to find another strategy, or should aim at improving the results in~\cite{eiter1995identifying,khachiyan2007conformality} first.

\bibliographystyle{alpha}
\bibliography{main}

\end{document}